\renewcommand\cal{\mathcal }
\renewcommand{\Bbb}{\mathbb}
\newcommand{\bbR}{{\Bbb R}}
\newcommand{\bbN}{{\Bbb N}}
\newcommand\E{\Bbb E}
\renewcommand\P{\Bbb P}
\newcommand\noi{\noindent}
\newcommand{\what}{\widehat}
\def\Sum{\mathop{\sum}}
\def\wtilde{\widetilde }
\renewcommand{\cite}{\citeyear}
 \title{Max--stable sketches: estimation of $\ell_\alpha-$norms, dominance norms and point 
 queries for non--negative signals}
 \date{February 6, 2006}
 \author{Stilian A.\ Stoev \\ Department of Statistics\\
 {University of Michigan, Ann Arbor} \\ {{\tt sstoev@umich.edu} }
 \\ {\ }\\ Murad S.\ Taqqu \\ {Department of Mathematics and Statistics} \\ 
 {Boston University} \\ {\tt murad@bu.edu}}
\begin{document}

 \maketitle

 \begin{abstract}
 Let $f:\{1,2,\ldots,N\}\to [0,\infty)$ be a non--negative signal, defined over a 
 very large domain and suppose that we want to be able to address
 approximate aggregate queries or point queries about $f$. To answer
 queries about $f$,  we introduce a new type of random sketches
 called {\it max--stable sketches}.  The (ideal precision) max--stable sketch
 of $f$, $E_j(f),\ 1\le j\le K$, is defined as:
 $$
  E_j(f) := \max_{1\le i \le N} f(i) Z_j(i),\ \ 1\le j \le K,
 $$
 where the $KN$ random variables $Z_j(i)$'s are independent with 
 standard $\alpha-$Fr\'echet distribution, that is, $\P\{ Z_j(i) \le x\}
 = \exp\{-x^{-\alpha}\},\ x>0$, where $\alpha$ is an arbitrary positive
 parameter.  Max--stable sketches are particularly natural when dealing with 
 maximally updated data streams, logs of record events and dominance norms or
 relations between large signals. 
 By using only max--stable sketches of relatively small size $K<<N$, we can compute
 in small space and time: {\it (i)} the $\ell_\alpha-$norm, $\alpha>0$, of the signal
 {\it (ii)} the distance between two signals in a metric, related to the
 $\ell_\alpha-$norm, and {\it (iii)} dominance norms, that is, the norm of the maxima
 of several signals.  In addition, we can also derive point queries about the signal.

 As is the case of $p-$stable, $0<p\le 2$, (sum--stable) additive
 sketches, see Indyk \cite{indyk:2000}, max--stability ensures that
 $E_j(f) \stackrel{d}{=} \|f\|_{\ell_\alpha} Z_1(1),$ with
 $\|f\|_{\ell_\alpha} = (\sum_{1\le i\le N} f(i)^\alpha)^{1/\alpha},$
 where $\stackrel{d}{=}$ means equal in distribution.  
 We derive $\epsilon-\delta$ probability bounds on the relative
 error for distances and dominance norms.  This can be implemented by
 efficient algorithms requiring small space even when the computational
 precision is finite and a limited amount of random bits are used.
 Our approach in approximating dominance norms improves considerably on
 existing techniques in the literature.
 \end{abstract}

 \section{Introduction and motivation}

 Random sketches have become an important tool in building unusually
efficient algorithms for approximate representation of large data
sets. One of their major applications is to data streams.  To put our
work into perspective, we start by describing briefly the data
streaming context.  We then list some of the major contributions and
discuss our results.

Consider an integer--valued signal $f:\{1,\ldots,N\}\to
\{-M,\ldots,M\}$, defined over a ``very large'' domain, so that it is
not feasible to store and/or process it in real time.  The signal is
updated or acquired sequentially in time, starting with the zero
signal at time zero.  Following, for example, Gilbert {\it et al.\ }
\cite{gilbert:kotidis:muthukrishnan:strauss:2001} (see also
Muthukrishnan \cite{muthukrishnan:2003} and the seminal work of Henzinger, Raghavan
and Rajagopalan \cite{henzinger:raghavan:rajagopalan:1998}), we focus on two streaming
models: {\it (i)} {\it cash register}, and {\it (ii)} {\it aggregate}.
In case {\it (i)}, data pairs $(i,a(i))$ are observed successively (in
arbitrary order in $i$) and on each data arrival, the $i-$th component
of the signal is updated incrementally (like a cash register): $f(i):=
f(i) + a(i)$.  In case {\it (ii)}, the data pairs
$(i, f(i))$ are observed directly (again in arbitrary order in $i$).  In this case,
a given index $i$ appears at most once and the corresponding $f(i)$'s are
not updated incrementally multiple times.  Model {\it (i)} is more general and more widely
used. Both models, however, have found important applications in many
areas such as on-line processing of large data bases, network traffic
monitoring, computational geometry, etc.\ (see, e.g.\ Gilbert,
Kotidis, Muthukrishnan and Strauss
\cite{gilbert:kotidis:muthukrishnan:strauss:2002,gilbert:kotidis:muthukrishnan:strauss:2001Q},
Cormode and Muthukrishnan \cite{cormode:muthukrishnan:2003W}, Indyk
\cite{indyk:2003}).  Much of the work on sketches was motivated by the
seminal paper of Alon, Matias and Szegedy
\cite{alon:matias:szegedy:1996}.  For a detailed review of methodologies and
applications in this emerging area in theoretical computer science
see  Muthukrishnan
\cite{muthukrishnan:2003}.

 {\it Random sketches} are statistical summaries of the signal $f$,
which can be updated sequentially (as the stream is observed) using
little 
processing time, processing space and computations.  Many algorithms involving
random sketches have been proposed, see e.g.\ Muthukrishnan
\cite{muthukrishnan:2003} and the references therein.  They provide as
a common
feature,  approximations to various queries
(functionals) on the signal $f$, within a factor of $(1\pm \epsilon)$,
with probability at least $(1-\delta)$, where $\epsilon >0$ and
$\delta>0$ are ``small'' error and probability parameters chosen in
advance.  Typically, this is realized by algorithms consuming an amount
$$
{\cal O}{\Big(}\log_2 M (\log_2 N)^{{\cal O}(1)} \ln(1/\delta) /
\epsilon^{{\cal O}(1)}{\Big)}
$$
 of storage, and even smaller order of processing time per stream
item $f$. 
 Here $M$ denotes the size of the range of the values $a(i)$
 in the cash register model  or
$f(i)$ in the aggregate model.  In
many applications, one may be willing to  sacrifice deterministic approximations at
the expense of stochastic approximations, which are valid with high
probability and are easy to compute.

Indyk \cite{indyk:2000} has pioneered the use of $p-$stable
distributions in random sketches (see also, Feingenbaum, Kannan, Strauss and Viswanathan \cite{feingenbaum:kannan:strauss:viswanathan:1999},
Fong and Strauss \cite{fong:strauss:2000}, Cormode
\cite{cormode:2003} and Cormode and Muthukrishnan
\cite{cormode:muthukrishnan:2003}).  The $p-$stable ($0<p\le 2$)
sketch of the signal $f$ is defined as:
 \beq
\label{e:Sj}
 S_j(f) :=
\sum_{i=1}^N f(i) X_j(i),\ \ j=1,\ldots,K, 
\eeq
 where the $KN$ random variables $X_j(i)$
are independent with a  $p-$stable distribution.
  The stability (sum--stability, see the Appendix) property of the
  $p-$stable distribution
implies that
 \beq
\label{e:Sj-stable}
 S_j(f) \stackrel{d}{=}
{\Big(}\sum_{i=1}^N |f(i)|^p {\Big)}^{1/p} X_1(1) = \|f\|_{\ell_p}
X_1(1),\ \ j=1,\ldots, K,
 \eeq
 where $\stackrel{d}{=}$ means equal in
distribution.  Also, by the linearity of the inner product
\refeq{Sj}, the sketch $S_j(f),\ j=1,\ldots,K$ of $f$ can be updated
sequentially, in both, {\it cash register} and {\it aggregate}
streaming models with ${\cal O}(K)$ operations per pair $(i,a(i))$ or
$(i,f(i))$, where the $X_j(i)$'s are generated efficiently, on demand
(see below).  In his seminal work, Indyk \cite{indyk:2000}, used $p=1$
(Cauchy distributions for the $X_j(i)$'s) and the median statistic
$$
  {\rm median}\{|S_j(f)|,\ {1\le j \le K}\}
$$ 
to estimate the norm $\|f\|_{\ell_1} := \sum_{i=1}^N |f(i)|$ of the
signal.  It was shown that for any $\epsilon>0$ and $\delta>0$, the
norm $\|f\|_{\ell_1}$ is estimated within a factor of $(1\pm
\epsilon)$ with probability at least $(1-\delta)$, provided that
\beq\label{e:Indyk} K \ge {\cal O} {\Big(} \frac{1}{\epsilon^2}
\log(1/\delta) {\Big)}.  \eeq Moreover, by using the results of Nisan
\cite{nisan:1990}, these estimates were shown to be realized with
${\cal O}(\log_2 M \log_2 (N/\delta) \log(1/\delta)/\epsilon^2)$ bits
of storage, needed primarily to store {\it truly random bits} or {\it
seeds} for the pseudorandom generator.  Roughly speaking, these seed
bits are used to efficiently generate any one of the $K N$ variables
$Z_j(i)$'s on demand, when a data pair $(i,a(i))$ or $(i,f(i))$ is
observed.

Exploiting the linearity of sketches and the properties of the stable
distributions, Indyk \cite{indyk:2000} also developed approximate
embeddings of high--dimensional signals $f \in \ell_1^N$ in
$\ell_1^m$, where $m<< N$.  This allowed efficient approximate
solutions to difficult nearest neighbor search algorithms in high
dimensions, see e.g.\ Datar {\it et al.}
\cite{datar:immorlica:indyk:mirrokni:2004}.  Other authors also used
stable distributions to construct efficient stochastic approximation
algorithms, see e.g.\ Cormode and Muthukrishnan
\cite{cormode:muthukrishnan:2003}.

Here, we propose a novel type of random sketches, called {\it max--stable sketches}.  Namely, consider 
a {\it non--negative} signal $f:\{1,\ldots,N\} \to [0,\infty)$.  The $\alpha-$max--stable sketch of
$f$ is defined as:
\beq\label{e:Ej}
 E_j(f) := \max_{1\le i \le N} f(i) Z_j(i),\ \ j=1,\ldots,K,
\eeq
where the $KN$ random variables $Z_j(i)$
 are independent standard $\alpha-$Fr\'echet , that is, 
\beq\label{e:Frechet}
 \P\{ Z_j(i) \le x\} = \Phi_\alpha(x) := \left\{\begin{array}{ll}
                               \exp\{ -x^{-\alpha}\} &,\ x>0\\
                                0 &,\ x\le 0,
                              \end{array} \right.
\eeq
and where $\alpha>0$ is an arbitrary positive parameter.  

The max--stable sketches can only be maintained in the {\it aggregate}
streaming model, that is, when any given index value $i$ is observed at
most once in the stream. This is so because the operation ``max'' is not linear. 
Indeed, if the signal values $f(i)$ were incremented sequentially (as in the cash register
model), then to be able to update the max--stable sketch of $f$, one would have to know
the whole signal thus far, which is not feasible.  Other than the {\it aggregate} model, a
natural streaming context for max--stable sketches is when the {\it cash register} is updated in a
max--incremental fashion:
$$
 f(i) := \max\{f(i),a(i)\}.
$$
In this setting, max--stable sketches can be maintained sequentially.  

\medskip
In the spirit of $p-$stable sketches, the max--stability of the $Z_j(i)$'s implies (see the Appendix):
$$
  E_j (f) \stackrel{d}{=} {\Big(}\sum_{i=1}^N f(i)^\alpha {\Big)}^{1/\alpha} Z_1(1)
                       = \|f\|_{\ell_\alpha} Z_1(1),\ \ j=1,\ldots,K.
$$ 
Therefore, as in Indyk \cite{indyk:2000},
 for any $\epsilon>0$ and $\delta>0$, we can estimate the norm
$\|f\|_{\ell_\alpha}$ within a factor of $(1\pm \epsilon)$,
 with probability at least $(1-\delta)$, if
$$
 K\ge {\cal O}{\Big(}\frac{1}{\epsilon^2}\ln(1/\delta){\Big)}.
$$
Following the ideas in Indyk \cite{indyk:2000}, we show by using results of Nisan \cite{nisan:1990}, 
that this can be realized with real algorithms of space 
$$
 {\cal O} {\Big(}\log_2(M)\log_2(N/\delta)\ln(1/\delta)/\epsilon^2{\Big)},\  \mbox{ and }\ 
 {\cal O} {\Big(}\log_2(M)\ln(1/\delta)/\epsilon^2{\Big)}
$$
processing time per stream item $(i,f(i))$.  Note that $\alpha>0$ can be chosen arbitrarily large, 
whereas one is limited to $0<p\le 2$ in the $p-$stable case.  Since the max--stable
sketches are non--linear, being able to approximate $\|f\|_{\ell_\alpha},$ for any
$\alpha>0$, does not imply approximation of the distance
$\|f-g\|_{\ell_\alpha}$
 in the $\ell_\alpha-$norm of
two signals based on their sketches.  Therefore, our results do not contradict the findings of
Saks and Sun \cite{saks:sun:2002}.  The recent paper of Indyk and Woodruff \cite{indyk:woodruff:2005} provides
algorithms for approximating $\ell_\alpha-$norms for $\alpha>2$ which essentially match the theoretical lower
bounds on the complexity in Saks and Sun \cite{saks:sun:2002}.  The strengths of max--stable sketches lie in 
approximating max--linear functionals.

\medskip
{\it  One of the key advantages of max--stable sketches is in handling dominance norms}.    
Cormode and Muthukrishnan \cite{cormode:muthukrishnan:2003}, consider the problem of estimating the norm
of the dominant of several signals, that is, {\it dominance norms}.  Given non--negative signals 
$f_r(i),\ 1\le i \le N,$ $r=1,\ldots,R$, the goal is to estimate the norm $\|f^*\|_{\ell_\alpha}$, where
\beq\label{e:f-star}
 f^*(i) := \max_{1\le r\le R} f_r(i),\ \ 1\le i \le N.
\eeq
Such type of problems are of interest when monitoring Internet traffic, for example, where $f_r(i)$ stands for the number of
packets transmitted by IP address $i$ in its $r-$th transmission.  The signal $f^*$ then represents {\it worst case scenario}
in terms of traffic load on the network and its norm or various other characteristics are of interest to network administrators.
Other applications of dominance norms arise when studying electric grid loads and in finance (for more details, 
see Cormode and Muthukrishnan \cite{cormode:muthukrishnan:2003} and the references therein).
A novel area of applications of max--stable sketches arises in privacy, see  
Ishai, Malkin, Strauss and Wright \cite{ishai:malkin:strauss:wright:2005}.

Suppose that we only have access to the max--stable sketches $E_j(f_r),\ 1\le j\le K$ of the signals $f_r$ in \refeq{f-star}.
In view of {\it max--linearity}, we then have
$$
 E_j(f^*) = \max_{1\le r \le R} E_j(f_r),\ \ 1\le j \le K.
$$
That is, one can recover the max--stable sketch of the dominant signal $f^*$ by taking component--wise maxima of the
sketches of the signals $f_r,\ 1\le r\le R$.  Therefore, the quantity $\|f^*\|_\alpha$, which is the dominance norm of 
the signals $f_r,\ 1\le r\le R$, can be readily estimated from the sketch $E_j(f^*)$ by using medians or sample moments.
Moreover, this can be done with precision within a factor of $(1\pm\epsilon)$ and with probability at least $(1-\delta)$,
provided that $K \ge {\cal O} (\ln(1/\delta)/\epsilon^2)$.  In practice, one deals with finite precision calculations and
pseudo--random number generators of bounded space.  In this setting, as in the case of approximating plain norms, one
can compute dominance norms by using an algorithm with processing space 
${\cal O} (\log_2(M)\log_2(N/\delta)\ln(1/\delta)/\epsilon^2)$, and 
${\cal O} (\log_2(M)\ln(1/\delta)/\epsilon^2)$ per item processing time. 
Our approach consumes less randomness, space and improves significantly on the processing time in the method of
Cormode and Muthukrishnan \cite{cormode:muthukrishnan:2003} (see Theorem 2 therein).  

In addition to norms and dominance norms, one can use max--stable sketches to recover large values of the
signal exactly.  We construct a point estimate $\what f(i_0)$ of $f(i_0)$, $i_0\in\{1,\ldots,N\}$, based 
on an {\it ideal precision} $\alpha-$max--stable sketch of size $K$, such that for any $\epsilon>0$ and $\delta>0$, 
\beq\label{e:fhat}
 \P \{ \what f(i_0) = f(i_0) \} \ge 1-\delta,\ \mbox{ if }\ f(i_0) > \epsilon \| f\|_{\ell_\alpha}
 \ \ \mbox{ and }\ \ K \ge { \ln(1/\delta) \over \epsilon ^\alpha}.
\eeq
Real algorithms of space
$$
 {\cal O}(\log_2(DN/\delta\epsilon)\log_2(N/\delta)\log_2(1/\delta)/\epsilon^\alpha) = {\cal O}((\log_2(DN/\delta\epsilon))^{{\cal O}(1)}
\log_2(1/\delta)/\epsilon^\alpha)$$
and smaller order of per stream item processing time exist.  
Here $D= 1+ f(i_0)/(\sum_{i\not=i_0}f(i)^\alpha)^{1/\alpha}$ and the signal $f$ is
supposed to take integer values.  Observe that one can easily maintain the largest $1/\epsilon^\alpha$ values of the signal
exactly.  Although, our method does not improve on the naive approach, the proposed estimator may be useful when the signal
is not directly observable but its max--stable sketch is available.  This is particularly useful in 
applications related to privacy, see the forthcoming paper of 
Ishai, Malkin, Strauss and Wright \cite{ishai:malkin:strauss:wright:2005}.

\medskip
Important ideas exploiting min--stability have been successfully used in the literature.
Cohen \cite{cohen.edith:1997} assigns to the items of a positive signal independent Exponential variables with parameters equal
to the signal values.  The minima of independent exponentials is an exponential variable with parameter
equal to the sum of the parameters of the individual components.  Therefore, by keeping only the minima of such exponential
variables corresponding to certain ranges of the signal values, one can estimate the sum of the signal values in these ranges.
This can be done efficiently, in small space and time, by taking independent copies of  such minima, see Theorem 2.3 in 
Cohen \cite{cohen.edith:1997}.  This approach can be viewed as a dual approach to that of the max--stable sketches.  Indeed, the
reciprocal of an Exponential variable is $\alpha-$Fr\'echet with $\alpha=1$.  We provide here a more general framework where
$\alpha$ can be arbitrary positive parameter and therefore we can estimate not only sums but $\ell_\alpha-$norms.  In fact, 
going a step further, we estimate efficiently {\it dominance norms} of several signals and show that relatively 
large values can be recovered exactly with high probability.

Alon, Duffield, Lund and Thorup \cite{alon:duffield:lund:thorup:2005} suggest an interesting random priority sampling scheme.
It assigns random priority $q_i = w_i/U_i,$ to an item $i$ which has weight $w_i>0$, where the $U_i$'s are independent uniformly 
distributed random numbers in $(0,1)$.  In our scenario $w_i$ corresponds to the signal value $f(i)$.  However, instead of
taking maxima of the priorities $q_i$ over $i$, these authors consider the
top$-k$ largest priority items.  By using a statistic, relative to the $(k+1)-$st largest priority, they can estimate
efficiently and with high probability the sum of the weights $w_i$ for relatively small $k$'s.  This is an interesting
approach and it differs from that of the max--stable sketches in two major aspects: {\it (i)} the random priorities $q_i$ have
Pareto distribution with heavy--tail exponent $1$, whereas we employ Fr\'echet distributions to be able to use their
max--stability property; {\it (ii)} in priority sampling, one keeps the top$-k$ values, whereas max--stable sketches keep 
different realizations of the maximal ``priority''.  The second difference between max--stable sketches and the priority 
sampling scheme of Alon, Duffield, Lund and Thorup 
\cite{alon:duffield:lund:thorup:2005} is crucial since the top$-k$ priorities are {\it dependent} random variables.  This fact,
we believe, makes the rigorous analysis of the variance in the priority sampling scheme rather difficult
(see Conjecture 1 in the last reference).  Nevertheless, priority sampling involves generating only $N$ independent
realizations, where $N$ is the size of the signal. It is thus computationally more efficient than our method and the method of Cohen
 \cite{cohen.edith:1997} (see also Section 1.5.6 in Alon, Duffield, Lund and Thorup 
\cite{alon:duffield:lund:thorup:2005} for a discussion).  

In summary, the max--stable sketches proposed here are natural when dealing with {\it dominance norms} and
$\ell_\alpha-$norms for arbitrarily large $\alpha>0$.  Their properties, moreover, can be established precisely and
rigorously related to the nature of the signal $f$.  Max--stable sketches complement and improve on existing techniques, and
can offer a new ``non--linear'' dimension in stochastic approximation algorithms.

\section{Approximating  $\ell_\alpha-$norms and distances}
 \label{s:dist}

We  show here that max--stable sketches can be used to estimate norms and certain distances between  
two signals.  For simplicity, we deal here with ideal precision sketches.  In an extended version of the
paper we show that efficient real algorithms exist by using the results of Nisan \cite{nisan:1990}.

 We first focus on estimating the norm $\|f\|_{\ell_\alpha} = (\sum_i f(i)^\alpha)^{1/\alpha}$ from
 the $\alpha-$max--stable sketch $E_j(f),\ 1\le j\le K$ of $f$  (see \refeq{Ej}).  Introduce the quantities
 $$
  \ell_{\alpha,r}(f) := {\Big(} \frac{1}{\Gamma(1-r/\alpha) K} \Sum_{j=1}^K E_j(f)^r {\Big)}^{1/r}
 $$
 for some $0<r<\alpha$, and
 $$
  \ell_{\alpha,{\rm med}}(f) := (\ln 2)^{1/\alpha} {\rm median}\{E_j(f),\ 1\le j \le K \},
 $$
 see \refeq{Zp} and \refeq{medZ} below for motivation.   By using the max--stability property of 
 $\alpha-$Fr\'echet distributions, we obtain the following result.

\begin{theorem}\label{t:norm}
 Let $\epsilon>0$ and $\delta>0$.  Then, for all $0<r<\alpha/2$, we have
 $$
 \P\{ |\ell_{\alpha,r}(f)/\|f\|_{\ell_\alpha} - 1| \le \epsilon \} \ge 1-\delta,
\ \ \mbox{ and }\ \ \P\{ |\ell_{\alpha,{\rm med}}(f)/\|f\|_{\ell_\alpha} - 1| \le \epsilon \}\ge 1-\delta,
$$
provided that $K\ge C \log(1/\delta)/\epsilon^2$. Here the constant $C>0$ depends only on 
$\alpha$ and $r$, in the case of the $\ell_{\alpha,r}(f)$ estimator.
\end{theorem}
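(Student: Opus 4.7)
The plan is to use the max-stability identity $E_j(f) \stackrel{d}{=} \|f\|_{\ell_\alpha}\zeta_j$ recorded just before the theorem, with $\zeta_1,\ldots,\zeta_K$ i.i.d.\ standard $\alpha$-Fr\'echet. This reduces both claims to a statement about sample moments and order statistics of a fixed distribution: the signal $f$ drops out entirely, and it suffices to control $\ell_{\alpha,r}(f)/\|f\|_{\ell_\alpha}$ and $\ell_{\alpha,{\rm med}}(f)/\|f\|_{\ell_\alpha}$ as functions of the $\zeta_j$'s only.

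For the median estimator I would work directly with the binomial representation of the sample median. The population median of $\Phi_\alpha$ is $(\ln 2)^{-1/\alpha}$, which explains the $(\ln 2)^{1/\alpha}$ rescaling, and the event $\{\ell_{\alpha,{\rm med}}(f)/\|f\|_{\ell_\alpha} > 1+\epsilon\}$ equals $\{\#\{j : \zeta_j > (1+\epsilon)(\ln 2)^{-1/\alpha}\} > K/2\}$. Each indicator is Bernoulli with parameter $p = 1 - 2^{-(1+\epsilon)^{-\alpha}} = 1/2 - c_\alpha\epsilon + {\cal O}(\epsilon^2)$, so Hoeffding's inequality for a sum of i.i.d.\ Bernoullis bounds the probability by $\exp(-2K(1/2-p)^2) \le \exp(-c'_\alpha K\epsilon^2)$. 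The lower deviation is handled symmetrically, and both tails drop below $\delta/2$ once $K\ge C\log(1/\delta)/\epsilon^2$.

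For the moment estimator I would argue in two steps. Since $0<r<\alpha/2$, one has $\E\zeta^r = \Gamma(1-r/\alpha)$ and $\E\zeta^{2r} = \Gamma(1-2r/\alpha) < \infty$, so $\ell_{\alpha,r}(f)^r$ is unbiased for $\|f\|_{\ell_\alpha}^r$ with variance of order $\|f\|_{\ell_\alpha}^{2r}/K$ controlled by $\Gamma(1-2r/\alpha)/\Gamma(1-r/\alpha)^2 - 1$. A direct Chebyshev bound only yields a $1/\delta$ dependence, so to obtain $\log(1/\delta)$ I would split the $K$ sketches into $K_1 = {\cal O}(\log(1/\delta))$ blocks of size $K_2 = {\cal O}(1/\epsilon^2)$, form the block moment estimator inside each block, and take the median of the $K_1$ block estimators. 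By Chebyshev each block estimator differs from $\|f\|_{\ell_\alpha}^r$ by a factor of $1\pm\epsilon$ with probability at least $3/4$, and a second Chernoff bound on the count of successful blocks promotes this to a $1-\delta$ guarantee for the median. Finally the map $y\mapsto y^{1/r}$ is Lipschitz near $\|f\|_{\ell_\alpha}^r$ with constant ${\cal O}(1/r)$, so a $(1\pm\epsilon)$ bound on the $r$-th power transfers to a $(1\pm C_r\epsilon)$ bound on the estimator itself; rescaling $\epsilon$ absorbs this into the final constant.

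The main obstacle is the concentration step for the moment estimator: the variables $\zeta_j^r$ have Pareto tails of index $\alpha/r > 2$ and are therefore heavy tailed, so neither sub-Gaussian nor sub-exponential inequalities apply. This is exactly what forces the median-of-means boosting, and it explains why the final constant must be allowed to depend on $\alpha$ and $r$ through the variance ratio $\Gamma(1-2r/\alpha)/\Gamma(1-r/\alpha)^2 - 1$, which diverges as $r\uparrow \alpha/2$; by contrast, the median estimator's analysis only sees the density of $\Phi_\alpha$ at its median, so its constant depends on $\alpha$ alone.
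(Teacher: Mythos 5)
Your reduction via max--stability and your treatment of the median estimator follow the paper's own route: the paper reduces both claims to i.i.d.\ standard $\alpha$-Fr\'echet variables $\xi_j$ exactly as you do, and then disposes of the median case by citing Lemma 2 of Indyk (2000), whose content is precisely the binomial/Hoeffding computation you carry out (the boundedness of the derivative of $\Phi_\alpha^{-1}$ at $y=1/2$ playing the role of your constant $c_\alpha$). Where you genuinely diverge is the moment estimator. The paper simply observes that $\xi_j^r$ has finite variance for $0<r<\alpha/2$ and appeals to the Central Limit Theorem, explicitly deferring quantitative constants to an extended version; you instead point out that Chebyshev alone gives only a $1/\delta$ dependence and that $\xi_j^r$ is too heavy--tailed for sub--Gaussian or sub--exponential bounds, and you repair this with a median--of--means construction. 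Your diagnosis is sound: $\xi_j^r$ is regularly varying with index $\alpha/r>2$, so the deviation probability of the plain sample mean at a fixed relative error decays only polynomially in $K$ (the single--big--jump phenomenon), and the paper's CLT sketch does not by itself deliver the claimed $\log(1/\delta)$ dependence uniformly in $\delta$. The caveat is that your fix proves the $\epsilon$--$\delta$ guarantee for a modified estimator (the median of block moment estimates), not for $\ell_{\alpha,r}(f)$ as literally defined in the theorem; so, read strictly, your argument changes the estimator, while the paper's argument for the literal estimator is only an asymptotic sketch. In short: the median part is the same approach made explicit; the moment part is a more rigorous but estimator--modifying argument, which is arguably what a fully quantitative version of this theorem requires.
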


\noi The idea of the proof is given in the Appendix.  It relies on the fact that $E_j(f)^r$ have
finite variances for all $0<r<\alpha/2$ and uses the Central Limit Theorem. More general results
where $\alpha/2 \le r < \alpha$ will be given in an extended version of the paper.

\medskip
The last result indicates that the quantities $\ell_{\alpha,r}(f),\ 0<r<\alpha/2$ and $\ell_{\alpha,{\rm med}}$
approximate $\|f\|_{\ell_\alpha}$ up to a factor of $(1\pm\epsilon)$ with probability at least $(1-\delta)$.
This can be achieved with an ideal--precision sketch of size $K = {\cal O}(\log(1/\delta)/\epsilon^2)$.

\bigskip
{\it We now focus on approximating distances.}  Consider two signals $f,\ g:\{1,\ldots,N\} \to [0,\infty)$ and
let $E_j(f),\ E_j(g),\ j=1,\ldots,K$ be their {\it ideal precision} max--stable 
sketches.  Observe that the max--stable sketches are non--linear and therefore
even if $f(i)\le g(i),\ 1\le i \le N$, the sketch $E_j(g-f)$ does not equal $E_j(g)-E_j(f)$.  Nevertheless, one
can introduce a distance between the signals $f$ and $g$, other than the norm $\|f-g\|_{\ell_\alpha}$ 
which can be computed by using the sketches $E_j(f)$ and $E_j(g)$.
 
Consider the functional 
$$
 \rho_\alpha(f,g):=
  \|f^\alpha - g^\alpha\|_{\ell_1} = \Sum_{i} |f(i)^\alpha - g(i)^\alpha|.
$$
One can verify that $\rho_\alpha(f,g)$ is a metric on $\bbR_+^N$.  This metric, rather than the
norm $\|f-g\|_{\ell_\alpha}$, is more natural when dealing with max--stable sketches 
(see, Stoev and Taqqu \cite{stoev:taqqu:2005}).  

Observe that 
 \begin{eqnarray*}
  \|f^\alpha - g^\alpha\|_{\ell_1}& =& \Sum_{i} (f(i)^\alpha\vee g(i)^\alpha - f(i)^\alpha) 
  + \Sum_{i}(f(i)^\alpha\vee g(i)^\alpha - g(i)^\alpha)\\
 & =& 2 \| f\vee g\|_{\ell_\alpha}^\alpha - \|f\|_{\ell_\alpha}^\alpha
 -\|g\|_{\ell_\alpha}^\alpha.
\end{eqnarray*}
By the max--linearity of max--stable sketches, we get $\E_j(f\vee g) = E_j(f)\vee E_j(g)$ (see \refeq{max-linear}, below).
Therefore, the terms in the last expression can be estimated in terms the estimators $\ell_{\alpha,r}(f)$ 
and $\ell_{\alpha,{\rm med}}(f)$ above.  Namely, we define
$$
\what\rho_{\alpha,r}(f,g) := 2 \ell_{\alpha,r}(f\vee g)^\alpha - \ell_{\alpha,r}(f)^\alpha - \ell_{\alpha,r}(g)^\alpha,
$$
for some $0<r<\alpha$, and
$$
\what\rho_{\alpha,{\rm med}}(f,g) := 2 \ell_{\alpha,{\rm med}}(f\vee g)^\alpha - \ell_{\alpha,{\rm med}}(f)^\alpha
  - \ell_{\alpha,{\rm med}}(g)^\alpha.
$$

\begin{theorem}\label{t:dist} Let $\epsilon>0$, $\delta>0$ and $\eta>0$. If 
\beq\label{e:separation}
 \rho_\alpha(f,g) \ge \eta \|f \vee g \|_{\ell_\alpha}^\alpha,
\eeq
then, for all $0<\rho <\alpha/2$, we have
\beq\label{e:dist}
 \P{\Big\{}|\what \rho (f,g) /\rho(f,g) - 1 | \le {\cal O}(\epsilon/\eta) {\Big \}} \ge 1-3\delta,
\eeq
provided that $K \ge C \ln(1/\delta)/\epsilon^2$. Here the constant $C>0$ depends only on 
$\alpha$ and $r$, in the case of the $\what\rho_{\alpha,r}(f,g)$ estimator.
Here $\what \rho(f,g)$ stands for either $\what\rho_{\alpha,{\rm med}}(f,g)$ or $\what\rho_{\alpha,r}(f,g)$.
\end{theorem}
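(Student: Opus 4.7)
The plan is to reduce the theorem directly to Theorem~\ref{t:norm} applied three times, once for each of the three signals $f$, $g$, and $f\vee g$ whose $\ell_\alpha$-norm enters the decomposition
$$
 \rho_\alpha(f,g) = 2\|f\vee g\|_{\ell_\alpha}^\alpha - \|f\|_{\ell_\alpha}^\alpha - \|g\|_{\ell_\alpha}^\alpha
$$
established in the paragraph preceding the theorem. The key observation is that $E_j(f\vee g) = E_j(f)\vee E_j(g)$ by max-linearity, so from the same realization of the sketches $\{Z_j(i)\}$ we obtain the three estimators $\ell(f)$, $\ell(g)$, $\ell(f\vee g)$ (with $\ell$ denoting either $\ell_{\alpha,r}$ or $\ell_{\alpha,\mathrm{med}}$). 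Since Theorem~\ref{t:norm} depends only on the marginal distribution of each $E_j(\cdot)$, which is $\|\cdot\|_{\ell_\alpha}Z_1(1)$ in all three cases, it applies to each one individually.

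First I would fix $\epsilon'>0$ (to be chosen proportional to $\epsilon$) and invoke Theorem~\ref{t:norm} for each of $f$, $g$, $f\vee g$ with tolerance $\epsilon'$ and confidence $1-\delta$. A union bound over these three events gives, with probability at least $1-3\delta$,
$$
 \bigl|\ell(h)/\|h\|_{\ell_\alpha} - 1\bigr|\le \epsilon' \quad\text{for } h\in\{f,g,f\vee g\},
$$
provided $K\ge C\log(1/\delta)/\epsilon'^{\,2}$. Next I would pass from first to $\alpha$-th powers: for $\epsilon'$ small enough, $(1\pm\epsilon')^\alpha = 1\pm \alpha\epsilon' + O(\epsilon'^{\,2})$, hence
$$
 \bigl|\ell(h)^\alpha - \|h\|_{\ell_\alpha}^\alpha\bigr| \le c_\alpha\,\epsilon'\,\|h\|_{\ell_\alpha}^\alpha
$$
for each of the three signals, with $c_\alpha$ a constant depending only on $\alpha$.

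Then I would plug these three bounds into the identity $\rho_\alpha(f,g) = 2\|f\vee g\|_{\ell_\alpha}^\alpha - \|f\|_{\ell_\alpha}^\alpha - \|g\|_{\ell_\alpha}^\alpha$ and the analogous expression for $\what\rho(f,g)$. The triangle inequality yields
$$
 \bigl|\what\rho(f,g) - \rho_\alpha(f,g)\bigr| \le c_\alpha\epsilon'\bigl(2\|f\vee g\|_{\ell_\alpha}^\alpha + \|f\|_{\ell_\alpha}^\alpha + \|g\|_{\ell_\alpha}^\alpha\bigr) \le 4c_\alpha\epsilon'\,\|f\vee g\|_{\ell_\alpha}^\alpha,
$$
where I used $\|f\|_{\ell_\alpha}^\alpha,\|g\|_{\ell_\alpha}^\alpha \le \|f\vee g\|_{\ell_\alpha}^\alpha$. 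Finally I would divide by $\rho_\alpha(f,g)$ and invoke the separation hypothesis \refeq{separation} to conclude
$$
 \bigl|\what\rho(f,g)/\rho_\alpha(f,g) - 1\bigr|\le 4 c_\alpha \epsilon'/\eta,
$$
which is of the required order $\mathcal{O}(\epsilon/\eta)$ after redefining $\epsilon'$ as a constant multiple of $\epsilon$ (absorbed into the constant $C$ controlling $K$).

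The main obstacle I foresee is not analytical but bookkeeping: tracking the $\alpha$-dependent constants in the step that converts a relative error on $\ell(h)$ into an additive error on $\ell(h)^\alpha$, and verifying that this step is valid under the hypothesis only for $\epsilon$ sufficiently small. One must also be mindful that the three norm estimates are \emph{not} independent (they are functions of the same $Z_j(i)$'s), which is precisely why the union bound, rather than a variance calculation on $\what\rho$ directly, is the clean way to proceed; luckily, Theorem~\ref{t:norm} is a purely marginal statement so this causes no difficulty.
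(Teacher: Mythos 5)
Your proposal is correct and follows essentially the same route as the paper's own proof: the decomposition $\rho_\alpha(f,g)=2\|f\vee g\|_{\ell_\alpha}^\alpha-\|f\|_{\ell_\alpha}^\alpha-\|g\|_{\ell_\alpha}^\alpha$, three applications of Theorem~\ref{t:norm} (via max--linearity for $f\vee g$), a union bound giving probability $1-3\delta$, and the separation condition \refeq{separation} to convert the additive error into a relative error of order ${\cal O}(\epsilon/\eta)$. Your explicit handling of the passage from a relative error on $\ell(h)$ to one on $\ell(h)^\alpha$ only makes precise a step the paper absorbs into its ${\cal O}(\epsilon)$ notation.
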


\noi The idea of the proof is given in the Appendix.  The condition \refeq{separation} is essential.  Indeed, 
by taking indicator signals $f(i) = 1_{A}(i)$ and $g(i) = 1_B(i)$, we get
that 
$$
 \rho_\alpha(f,g) = \Sum_{i} 1_A(i) 1_B(i) = |A\cap B|.
$$
Therefore, if condition of type  \refeq{separation} was not present, one would be able to efficiently
estimate the intersection of the two sets $A$ and $B$ with small relative error, which is proved to be 
a hard problem (see, Razborov \cite{razborov:1992} and also Section 4 in
Cormode and Muthukrishnan \cite{cormode:muthukrishnan:2003}).

\section{Approximating dominance norms}

\label{s:dom}

 Let now $f_r(i),\ 1\le i \le N,$ $r=1,\ldots,R$ be $R$ non--negative signals defined over large domain.
Our goal is to approximate their dominance $\ell_\alpha-$norm, that is, the norm $\|f^*\|_{\ell_\alpha},\ \alpha>0,$ 
of the signal
$$
 f^*(i) := \max_{1\le r \le R} f_r(i), \ \ 1\le i \le N.
$$

As argued in the introduction, such problems arise in Internet traffic monitoring, electric grid management and 
also in finance.  The seminal paper of Cormode and Muthukrishnan \cite{cormode:muthukrishnan:2003} addresses 
the problem of dominance norm estimation in the special case $\alpha=1$.  It was shown therein that the problem
has a small space and time approximate solution, valid with high probability.  The main tool used used in the last work
are $p-$(sum)stable sketches of the data where the stability index $p>0$ is taken to have very small values.
Here, we propose an alternative solution to the dominance norm problem, which is superior in terms of time and space
consumption and also works when dealing with $\|\cdot\|_{\ell_\alpha}$ for an arbitrary $\alpha>0$.  In the end of this 
section, we also show the connection between our approach and that of Cormode and Muthukrishnan \cite{cormode:muthukrishnan:2003}.

Let  $E_j(f_r),\ 1\le j \le K$ be the $\alpha-$max--stable sketches of the signals $f_r,\ r=1,\ldots,R$.
By  max--linearity of the max--stable sketch:
\beq\label{e:f*}
 E_j( f^*) = \max_{1\le r\le R} E_j(f_r),\ \ \forall j,
\eeq
where $f^* (i) = \max_{1\le r \le R} f_r(i)$, $1\le i\le N$.

\noi Hence, by using sample medians for example, we get
\begin{eqnarray*}
& & {\rm dom}_{\rm max,\alpha}(f_1,\ldots,f_R) := \|f^*\|_{\ell_\alpha} \approx \ell_{\alpha,{\rm med}}(f^*)\\
& & \ \ \ \ \ \    
 = (\ln 2)^{1/\alpha} {\rm median} \{ \vee_r E_j(f_r),\ 1\le j \le K\}.
\end{eqnarray*}

\noi Our first results on $\ell_\alpha-$norm approximation imply:

\begin{theorem}\label{t:dom} Let $\epsilon>0$ and $\delta>0$.  For all $0<r<\alpha/2$:
$$
 \P\{ |\ell_{\alpha,r}(f^*)/\|f^*\|_{\ell_\alpha} - 1| \le \epsilon \}\ge 1-\delta,\ \ \mbox{ and }\ \ 
  \P\{ |\ell_{\alpha,{\rm med}}(f^*)/\|f\|_{\ell_\alpha} - 1| \le \epsilon \}\ge 1-\delta,
$$
provided  
$$
  K\ge C \log(1/\delta)/\epsilon^2.
$$
Here the constant $C>0$ depends only on $\alpha$ and $r$, in the case of the $\ell_{\alpha,r}(f^*)$ estimator.
\end{theorem}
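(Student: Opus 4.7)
The plan is to recognize that Theorem~\ref{t:dom} follows almost immediately from Theorem~\ref{t:norm} once the max--linearity identity \refeq{f*} is established pathwise. First I would verify that, when all signals $f_1,\ldots,f_R$ are sketched against the \emph{same} underlying family of Fr\'echet variables $Z_j(i)$, the sketch of $f^*$ as defined directly via \refeq{Ej} coincides with the coordinate--wise maximum of the individual sketches:
$$
 E_j(f^*) = \max_{1\le i \le N} f^*(i) Z_j(i) = \max_{1\le i \le N} \max_{1\le r \le R} f_r(i) Z_j(i) = \max_{1\le r \le R} E_j(f_r),
$$
where the interchange of maxima is justified because $f_r(i)\ge 0$ and $Z_j(i)\ge 0$ almost surely. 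Hence the random variables $\{E_j(f^*)\}_{1\le j\le K}$ are, pathwise, an ideal precision $\alpha-$max--stable sketch (in the sense of \refeq{Ej}) of the non--negative signal $f^*:\{1,\ldots,N\}\to [0,\infty)$.

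Next I would observe that the estimators $\ell_{\alpha,r}(f^*)$ and $\ell_{\alpha,\mathrm{med}}(f^*)$ are, by definition, functions of $\{E_j(f^*)\}_{1\le j\le K}$ only. Since $f^*$ is a legitimate non--negative signal on $\{1,\ldots,N\}$, Theorem~\ref{t:norm} applied to $f^*$ yields
$$
 \P\{|\ell_{\alpha,r}(f^*)/\|f^*\|_{\ell_\alpha} - 1| \le \epsilon\} \ge 1-\delta
$$
for $0<r<\alpha/2$, and the analogous bound for $\ell_{\alpha,\mathrm{med}}(f^*)$, provided $K\ge C\log(1/\delta)/\epsilon^2$ with $C$ depending only on $\alpha$ (and on $r$ in the sample--moment case). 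This is precisely the conclusion of Theorem~\ref{t:dom}.

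There is no genuine obstacle here: the entire novelty beyond Theorem~\ref{t:norm} is the displayed pathwise identity, and this is exactly the feature that distinguishes max--stable from sum--stable sketches. What makes the result practically meaningful is that $E_j(f^*)$ is obtained by $R$ coordinate--wise maxima of already--stored sketches, so the signal $f^*$ never needs to be materialized and the sketches of the individual $f_r$'s can be maintained in parallel in the aggregate (or max--incremental cash register) streaming model. The dependence of $C$ on $\alpha$ and $r$ is inherited from Theorem~\ref{t:norm} and therefore is uniform in $R$ and in the signals $f_r$.
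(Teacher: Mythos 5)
Your argument is correct and is exactly the paper's proof: the identity $E_j(f^*)=\max_{1\le r\le R}E_j(f_r)$ (max--linearity, relation \refeq{f*}) holds pathwise because all signals are sketched against the same non--negative $Z_j(i)$'s, and then Theorem~\ref{t:norm} applied to the non--negative signal $f^*$ gives the stated $\epsilon$--$\delta$ bounds with the same $K$ and constant $C$. Nothing further is needed.
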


\noi The proof of this result follows from the max--linearity of the max--stable sketches (see \refeq{f*}) and
Theorem \ref{t:norm} above.  We now make some remarks on the differences between our approach and that of Cormode and 
Muthukrishnan \cite{cormode:muthukrishnan:2003}.

\medskip
\noi{\bf Remarks:}
 \begin{itemize}
 \item  Note that the statement of Lemma 1 in the last reference is mathematically incorrect.  One cannot have $\alpha$ 
 in the right--hand side since limit has been taken as $\alpha \to 0+$ in the left--hand side therein.  The correct 
 statement is as follows:

{\it  Let $\xi_\alpha$ be symmetric $\alpha-$stable random variables with constant scale coefficients
 $\sigma>0$.  Then, as $\alpha \to 0+$, we have
 $$
 |\xi_\alpha|^\alpha \stackrel{d}{\longrightarrow} Z,
 $$
 where $Z$ is a standard $1-$Fr\'echet random variable, that is, $\P\{Z\le x\} = \exp\{-1/x\},\ x>0$.
 Observe that $Z = 1/X$, where $X$ is an Exponential random variable with mean $1$.
 See  Exercise 1.29, p.\ 54 in Samorodnitsky and Taqqu \cite{samorodnitsky:taqqu:1994book}.}

 Therefore, the continuity of the cumulative distribution function of the limit $Z$ implies that the medians of
the distributions of $|\xi_\alpha|^\alpha$ converge, as $\alpha\to 0 +$, to the median of $Z$ which is $1/\ln(2)$
(see \refeq{medZ} below).  (Note that here we use 
$\alpha$ as in Cormode and Muthukrishnan \cite{cormode:muthukrishnan:2003} whereas the parameter
$\alpha$ plays a different role in Relation \refeq{medZ} below.)

 \item The method of Cormode and Muthukrishnan \cite{cormode:muthukrishnan:2003} uses $p-$(sum)stable sketches with
 very small $p>0$.  The $p-$stable distributions involved in these sketches have infinite moments of all orders 
 greater than $p$ and in practice take extremely large values.  This poses a number of practical challenges in
 storing and in fact precisely generating these random sketches.  Our method does involve heavy--tailed random 
 variables but they are not as {\it extremely heavy--tailed} and have good computational properties. 
 Furthermore, Fr\'echet distributions can be simulated more efficiently than sum--stable distributions
(see the Appendix).  Therefore, in practice we expect our method to be more robust than the one of Cormode
 and Muthukrishnan \cite{cormode:muthukrishnan:2003}.

 \item The storage and per item processing times of our method are significantly less than those of Cormode
 and Muthukrishnan \cite{cormode:muthukrishnan:2003}.
 \end{itemize}

\section{Answering point queries with max--stable sketches}
\label{s:point}

 Max--stable sketches can  be also used to recover relatively large values of the
 signal exactly with high probability.  This problem has in fact a deterministic solution by using
 a naive algorithm in small space and time.  Namely, as the signal is being observed (in the aggregate or 
 time series model) we simply maintain a vector of the top$-K$ largest values. Max--stable sketches however, can
 be very helpful if no direct access to the signal is available either due to security, computational, power or 
 privacy restrictions (see Ishai, Malkin, Strauss and Wright \cite{ishai:malkin:strauss:wright:2005}).  
 
 We first present the main ideas using  ideal algorithms which assume infinite precision and random variables which can
 be perfectly generated.  We then remove these idealizations.   Consider a non--negative signal
 $f:\{1,\ldots, N\} \to [0,\infty)$. Let $\alpha>0$ and let 
 $E_j(f),\ j=1,\ldots, K$ be an {\it ideal} $\alpha-$max--stable
 sketch of $f$ defined in \refeq{Ej}. 
 Given an $i_0\in \{1,\ldots,N\}$, set
 \beq\label{e:gk}
  g_j(i_0) := { E_j(f) \over Z_j(i_0)}
            ={ \max_{1\le i \le N} f(i) Z_j(i) \over Z_j(i_0)},\ \ j=1,\ldots, K,
 \eeq
 and define the {\it point query} estimate $\what f(i_0)$ as the smallest
 of the $g_{j}(i_0),\ j=1,\ldots,K$. If
 $g_{(j)}(i_0),\ j=1,\ldots,K$ denote the sorted $g_j(i_0)$'s:
$$
  g_{(1)}(i_0) \le g_{(2)}(i_0) \le \cdots \le g_{(K)}(i_0),
$$
then  the {\it point query} estimate $\what f(i_0)$ is
\beq\label{e:fhat-def}
 \what f(i_0) := g_{(1)}(i_0) = \min_{1\le j \le K} g_j (i_0).
\eeq

We  also introduce the following criterion:
$$
 {\rm criterion}(i_0) := \left\{\begin{array}{ll}
                                   1 &,\ \mbox{ if }g_{(1)}(i_0) = g_{(2)}(i_0) \\
                                   0 &,\ \mbox{ if }g_{(1)}(i_0) < g_{(2)}(i_0).
                                 \end{array}
 \right.
$$
which  serves as a proxy for $\what f(i_0) = f(i_0)$, as indicated in
the following theorem.

\begin{theorem} \label{t:pointwise}
 Let $\epsilon\in (0,1)$, $\delta>0$ and $i_0\in\{1,\ldots,N\}$.

  {\bf (i)} If $f(i_0)> \epsilon\|f\|_{\ell_\alpha}$ and $K\ge \ln(1/\delta)/\epsilon^{\alpha}$ 
   (see \refeq{fhat}), then
   \beq\label{e:t:pointwise-i}
    \P\{ \what f(i_0) = f(i_0) \} \ge 1-\delta.
   \eeq

  {\bf (ii)} {\bf (a)}
 ${\rm criterion}(i_0) = 1$ implies $\what f(i_0)  =
 f(i_0)$.

{\bf (ii)} {\bf (b)}
 If for some $\theta > 0$, 
 $$ f(i_0)> \epsilon\|f\|_{\ell_\alpha} \ \mbox{ and }\ 
 K\ge \max\{3, 2C_\theta \ln(2/\delta)/\epsilon^{\alpha + \theta} \},$$
 where $C_\theta = {\cal O}(1/\theta^{1+\theta/\alpha})$ is given in \refeq{Ctheta}, then
   \beq\label{e:t:pointwise-ii}
    \P\{ {\rm criterion} (i_0) = 1 \} \ge 1- \delta.
   \eeq
 \end{theorem}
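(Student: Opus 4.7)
The starting point for all three parts is the deterministic inequality $g_j(i_0)\ge f(i_0)$, which follows from the definition of $g_j(i_0)$ in \refeq{gk} by taking $i=i_0$ in the max; moreover, equality holds if and only if the argmax of $i\mapsto f(i)Z_j(i)$ is attained at $i=i_0$. For part {\bf (i)} I would compute this hit probability by the standard reduction to exponentials: since $Z_j(i)\stackrel{d}{=}E_{j,i}^{-1/\alpha}$ with $E_{j,i}$ i.i.d.\ unit exponentials, the argmax of $f(i)Z_j(i)$ coincides with the argmin of $E_{j,i}/f(i)^\alpha$, and the latter are independent exponentials with rates $f(i)^\alpha$. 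Hence
$$
 p := \P\{g_j(i_0)=f(i_0)\} = f(i_0)^\alpha/\|f\|_{\ell_\alpha}^\alpha > \epsilon^\alpha
$$
under the hypothesis $f(i_0)>\epsilon\|f\|_{\ell_\alpha}$. By independence across the $K$ sketches, $\P\{\what f(i_0)\ne f(i_0)\}=(1-p)^K\le e^{-K\epsilon^\alpha}\le\delta$ whenever $K\ge \ln(1/\delta)/\epsilon^\alpha$.

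For part {\bf (ii)(a)} the key point is that each $g_j(i_0)$ is a mixture: with probability $p$ it equals the atom $f(i_0)$, and on the complementary event it equals $\max_{i\ne i_0} f(i)Z_j(i)/Z_j(i_0)$, which has a continuous distribution. The $g_j(i_0)$'s are independent across $j$, so any two non-atomic $g_j(i_0)$'s are distinct almost surely, and any non-atomic $g_j(i_0)$ differs from the atom $f(i_0)$ almost surely. Therefore the event $\{g_{(1)}(i_0)=g_{(2)}(i_0)\}$ can occur, up to a null set, only when both coincide with the atom, in which case $\what f(i_0)=g_{(1)}(i_0)=f(i_0)$.

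For part {\bf (ii)(b)} set $X_j:=1\{g_j(i_0)=f(i_0)\}$, which are i.i.d.\ Bernoulli$(p)$ with $p\ge \epsilon^\alpha$, and $N^*:=\sum_{j=1}^K X_j$. The almost-sure equivalence from (ii)(a) gives $\{{\rm criterion}(i_0)=1\}=\{N^*\ge 2\}$ a.s., so it suffices to bound $\P\{N^*\le 1\}$ by $\delta$. Expanding the binomial,
$$
 \P\{N^*\le 1\}=(1-p)^K+Kp(1-p)^{K-1}\le e^{-Kp}\bigl(1+Kp\, e^{p}\bigr),
$$
and the main issue is the polynomial prefactor $1+Kp$, which handled naively would introduce an $\ln\ln(1/\delta)$ correction in $K$. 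To eliminate this I would apply the elementary inequality $\ln y\le (\alpha/\theta)\,y^{\theta/\alpha}$ (valid for $y\ge 1$ and any $\theta>0$) to $y=Kp$: trading a factor $\epsilon^{-\theta}$ in $K$ for the ability to absorb $\ln(1+Kp)$ into $Kp/2$, one checks that the hypothesis $K\ge 2C_\theta\ln(2/\delta)/\epsilon^{\alpha+\theta}$ with $C_\theta$ of order $\theta^{-(1+\theta/\alpha)}$ implies $Kp-\ln(1+Kp)\ge \ln(2/\delta)$, while the clause $K\ge 3$ anchors the regime $Kp\ge 1$ so that the inequality applies. The main obstacle is precisely this bookkeeping: balancing the exponent $\alpha+\theta$ against the blow-up of $C_\theta$ as $\theta\to 0$, which is purely algebraic rather than probabilistic, but must be done with a single uniform constant.
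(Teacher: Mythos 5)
Your proposal is correct, and for parts {\bf (i)} and {\bf (ii)(a)} it is essentially the paper's argument in different clothing: the paper computes the per-coordinate ``hit'' probability by writing $\what f(i_0)\stackrel{d}{=}f(i_0)\vee\min_j c_f(i_0)Z_j(2)/Z_j(1)$ and invoking the ratio law $\P\{\xi/\eta\le x\}=1/(x^{-\alpha}+1)$ (Lemma \ref{l:xi/eta}), whereas you obtain the same value $p=f(i_0)^\alpha/\|f\|_{\ell_\alpha}^\alpha$ via the exponential--argmin reduction; both give exactly $\P\{\what f(i_0)\ne f(i_0)\}=(1-p)^K$, and your atom-versus-continuous-part argument for {\bf (ii)(a)} is the same null-set reasoning the paper carries out through the order statistics $\xi_{(1)},\xi_{(2)}$ (note the paper, like you, only gets the implication almost surely). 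The genuine difference is in {\bf (ii)(b)}: the paper bounds $\P\{{\rm criterion}(i_0)=0\}\le(K+1)q^{K-1}$ with $q=1-p$, dropping the factor $1-q=p$ from the linear binomial term, and then must absorb a $\ln K$ via the $u^{\gamma}\ge\ln u$ device, which is precisely what forces the exponent $\epsilon^{\alpha+\theta}$ and the constant $C_\theta$ of \refeq{Ctheta}. You keep that factor, getting the prefactor $1+Kp\,e^{p}$ instead of $K+1$, so the logarithm you must absorb is $\ln(1+Kp)$ with $Kp\gtrsim C_\theta\ln(2/\delta)/\epsilon^{\theta}$ already comparable to the exponent; this is a cleaner route and in fact would prove the conclusion under the weaker requirement $K\ge C\ln(2/\delta)/\epsilon^{\alpha}$ for an absolute constant $C$, making the $\theta$-inflation essentially unnecessary (the stated hypothesis then suffices once your $C_\theta$ is at least a modest absolute constant, consistent with the claimed order). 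Two small bookkeeping slips, neither fatal: the clause $K\ge 3$ does \emph{not} by itself give $Kp\ge 1$ (since $p>\epsilon^{\alpha}$ can be tiny; in the paper $K\ge3$ is used only for $K-1\ge(K+1)/2$) --- the regime $Kp\ge 1$ comes from the other clause together with $C_\theta\ge1$; and your target inequality should read roughly $Kp-\ln(1+Kp)\ge\ln(1/\delta)+1$ rather than $\ge\ln(2/\delta)$, to account for the $e^{p}\le e$ factor --- a constant easily absorbed into $C_\theta$.
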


 \medskip
 We now address the algorithmic implementation of the point query
 problem and its criterion.  This is more involved now than in the case of norms and therefore we present
 a detailed argument here.  Following Indyk \cite{indyk:2000}, suppose now that the signal is only of finite precision e.g.\
$$
 f:\{1,\ldots,N\} \to \{0,1,\ldots,L\}
$$
 and suppose, moreover, that our pseudorandom numbers can only
 take values in the set $V_L :=\{p/q,\ p, q\in \{0,1,\ldots,L\},\ q\not=0\}$.  
 Let $U_j(i)$ be infinite precision independent uniform random numbers in
 $(0,1)$. We shall
base our algorithms on discretized versions of the ideal standard $\alpha-$Fr\'echet variables
 $Z_j(i):= \Phi_\alpha^{-1}(U_j(i))$, where
 $\Phi_\alpha^{-1}(y) := (\ln(1/y))^{-1/\alpha},\ y\in(0,1)$ is the
 inverse cumulative distribution function of a standard $\alpha-$Fr\'echet variable.
 Fix a small parameter
 $\gamma>0$ (to be specified), and  suppose that 
 \beq\label{e:Uj-gamma} 
  U_j(i) \in [\gamma,1-\gamma],\ \ \mbox{ for all }i =1,\ldots, N,\ j=1,\ldots,K.
 \eeq
 This is not a limitation since this
 happens with probability at least $(1-2\gamma)^{KN}$ which is at least
 $(1-\delta)$ provided that $\gamma < \delta/(4KN) = {\cal O}(
 \delta/ KN )$.  This is so because $|\ln(1-2\gamma)| \le 4 \gamma$, $\forall \gamma \in (0,1/4)$ and
 since $|\ln(1-\delta)| \ge \delta,$  $\forall \delta \in (0,1)$.

 Let now $\wtilde U_j(i)$ be a  multiple of $1/L$, nearest
 to $U_j(i)$, which is also in the set $[\gamma,1-\gamma]\cap V_L$.
 Let
%
%
 $Z_j^*(i):= \Phi_\alpha^{-1}(\wtilde U_j(i))$ and let $\wtilde
 Z_j(i)$ be a  multiple of $1/L$
  in the set $V_L$, nearest to $Z_j^*(i)$.
 Observe that $|\wtilde Z_j(i) - Z_j^*(i)|\le 1/L$ and, as
 in Indyk \cite{indyk:2000}, by the mean value theorem,
 \beq\label{e:beta-precision}
  |Z_j(i) - Z_j^*(i)| \le \frac{1}{L}\sup_{y\in [\gamma,1-\gamma]}{\Big|}d\Phi_\alpha^{-1}(y)/dy{\Big|}
  = {\cal O}{\Big(}\frac{1}{L \gamma^{1+1/\alpha}}{\Big)}.
 \eeq
 and therefore, 
 \beq\label{e:wtilde-Z-Z}
   |\wtilde Z_j(i) - Z_j(i)| \le \beta := {\cal O}(1/ L \gamma^{1+1/\alpha}),\ \ \mbox{ for all }i=1,\ldots,N,\  j=1,\ldots,K.
 \eeq
 
 \begin{theorem}\label{t:pointwise-algorithm} Let $\epsilon \in (0,(\alpha/(\alpha+1))^{1/\alpha^2})$, 
 $\delta\in(0,1)$ and $D>0$.  Suppose that
 \beq\label{e:t:pointwise-algorithm}
  2(1 + C_\alpha) \epsilon (\sum_{1\le i \le N} f(i)^\alpha)^{1/\alpha} \le f(i_0) \le D (\Sum_{i\not = i_0} f(i)^\alpha)^{1/\alpha},
 \eeq
  where $C_\alpha = \alpha (1+1/\alpha)^{1+1/\alpha} e^{-(1+1/\alpha)}$.

  If the precision $\beta$ in \refeq{wtilde-Z-Z} is such that $\beta \le \epsilon^\alpha/(D+1)$, then there exists an algorithm, implementing 
 the point estimator $\what f(i_0)$ so that 
  \beq\label{e:t:pointwise-algorithm-c} 
   \P \{ \what f(i_0) = f(i_0) \} \ge 1 - 3\delta,
  \eeq 
  holds.  This can be done
  in space ${\cal O} (\log_2(D N/\epsilon \delta) \log_2(N)\ln(1/\delta)/\epsilon^{\alpha})$ 
  with the same order of bit--wise operations per stream item.
 \end{theorem}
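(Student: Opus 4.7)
\medskip
\noi \textbf{Proof plan.}
The strategy is to bootstrap the ideal-precision argument of Theorem \ref{t:pointwise}(i) and absorb the discretization error $\beta$ by exhibiting a strictly positive margin in the ideal sketch that survives the $\beta$-perturbation. I would partition the good event into three pieces, each of probability at least $1-\delta$.

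\emph{Step 1 (three high-probability events).} Let $S := (\sum_{i\neq i_0} f(i)^\alpha)^{1/\alpha}$, so that the lower bound in \refeq{t:pointwise-algorithm} gives $f(i_0)^\alpha/\|f\|_{\ell_\alpha}^\alpha \ge (2(1+C_\alpha))^\alpha \epsilon^\alpha \ge \epsilon^\alpha$. By max-stability, $\P\{i_0 \text{ is the arg-max in ideal sketch } j\} = f(i_0)^\alpha/\|f\|_{\ell_\alpha}^\alpha$, independently across $j$. Hence with $K \ge \ln(1/\delta)/\epsilon^\alpha$ the event
$$\mathcal{E}_1 := \{\exists\, j^*\le K:\ f(i_0) Z_{j^*}(i_0) = \max_i f(i) Z_{j^*}(i)\}$$
has probability at least $1-\delta$. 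The event $\mathcal{E}_2 := \{U_j(i)\in[\gamma,1-\gamma]\ \forall i,j\}$, implying the perturbation bound \refeq{wtilde-Z-Z} with $\beta\le \epsilon^\alpha/(D+1)$, has probability at least $1-\delta$ when $\gamma = {\cal O}(\delta/KN)$, as noted in \refeq{Uj-gamma}. Finally I introduce the margin event $\mathcal{E}_3 := \{M_{j^*} > \beta(f(i_0)+\|f\|_\infty)\}$ where $M_{j^*} := f(i_0) Z_{j^*}(i_0) - \max_{i\neq i_0} f(i) Z_{j^*}(i)$.

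\emph{Step 2 (perturbation absorption).} On $\mathcal{E}_1\cap\mathcal{E}_2\cap\mathcal{E}_3$ I claim that for the witness sketch $j^*$ the index $i_0$ remains the arg-max of the perturbed sketch. Indeed, using $|\wtilde Z_{j^*}(i)-Z_{j^*}(i)|\le \beta$ for all $i$,
$$f(i_0)\wtilde Z_{j^*}(i_0) - f(i)\wtilde Z_{j^*}(i) \;\ge\; M_{j^*} - \beta\bigl(f(i_0)+f(i)\bigr) \;\ge\; M_{j^*} - \beta\bigl(f(i_0)+\|f\|_\infty\bigr) \;>\; 0.$$
Consequently $\wtilde E_{j^*}(f) = f(i_0) \wtilde Z_{j^*}(i_0)$, so $\wtilde g_{j^*}(i_0) = f(i_0)$ exactly, and since $\wtilde g_j(i_0) \ge f(i_0)$ for every $j$ (the perturbed maximum always dominates $f(i_0)\wtilde Z_j(i_0)$), we conclude $\what f(i_0) = \min_j \wtilde g_j(i_0) = f(i_0)$.

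\emph{Step 3 (margin bound via the Fr\'echet density).} The main obstacle is showing $\P(\mathcal{E}_3)\ge 1-\delta$. Here I exploit that $C_\alpha = \sup_{x>0}\phi_\alpha(x)$ is precisely the maximum of the standard $\alpha$-Fr\'echet density. Writing $W := f(i_0)Z_{j^*}(i_0)$ (Fr\'echet, scale $f(i_0)$) and $V := \max_{i\neq i_0} f(i) Z_{j^*}(i)$ (Fr\'echet, scale $S$, independent of $W$), a density-based interval estimate gives $\P\{0 < W-V \le t\} \le t\cdot\sup_w \phi_{\alpha,f(i_0)}(w) = C_\alpha t/f(i_0)$. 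Bounding $\|f\|_\infty \le (D+1)\,S/\max(D,1)$ using the upper bound in \refeq{t:pointwise-algorithm}, choosing $t = \beta(f(i_0)+\|f\|_\infty)$ and invoking $\beta\le\epsilon^\alpha/(D+1)$ gives $\P\{M_{j^*}\le t\mid W>V\}\le C_\alpha \epsilon^\alpha/f(i_0)^\alpha\cdot \|f\|_{\ell_\alpha}^\alpha$, which the strengthened lower bound $f(i_0)\ge 2(1+C_\alpha)\epsilon\|f\|_{\ell_\alpha}$ forces to be less than $\delta$ (the factor $2(1+C_\alpha)$ is exactly what is needed to dominate both the constant $C_\alpha$ from the density and the extra factor $2$ from the union bound over ``$i_0$ is not the arg-max'' versus ``margin is too small'').

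\emph{Step 4 (algorithm and complexity).} Once correctness is established, I follow Indyk \cite{indyk:2000} to implement the pseudorandom $\wtilde Z_j(i)$'s via Nisan's generator \cite{nisan:1990}: storing ${\cal O}(\log_2(N/\delta))$ truly random seed bits per sketch and reconstructing any $\wtilde Z_j(i)$ in ${\cal O}(\log_2 L)$ bit-operations. The precision constraint $\beta \le \epsilon^\alpha/(D+1)$, combined with \refeq{beta-precision} and $\gamma = {\cal O}(\delta/KN)$, forces $L = {\cal O}\bigl((D+1)(KN/\delta)^{1+1/\alpha}/\epsilon^\alpha\bigr)$, so $\log_2 L = {\cal O}(\log_2(DN/\epsilon\delta))$. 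Multiplying by the $K={\cal O}(\ln(1/\delta)/\epsilon^\alpha)$ sketches yields the claimed space ${\cal O}(\log_2(DN/\epsilon\delta)\log_2(N/\delta)\ln(1/\delta)/\epsilon^\alpha)$, with per-item processing of the same or smaller order. The final union bound over $\mathcal{E}_1,\mathcal{E}_2,\mathcal{E}_3$ delivers \refeq{t:pointwise-algorithm-c} with probability at least $1-3\delta$.
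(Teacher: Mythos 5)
Your Steps 1, 2 and 4 are broadly the same route as the paper: reduce to truly independent finite--precision variables via the event \refeq{Uj-gamma}, use the perturbation bound \refeq{wtilde-Z-Z}, exploit the convexity of $\Phi_\alpha$ below $(\alpha/(\alpha+1))^{1/\alpha}$ (equivalently, the maximal Fr\'echet density $C_\alpha$) together with the upper bound $f(i_0)\le D\,c_f(i_0)$ to control the discretization, and invoke Nisan's generator for the derandomization and the space count. The problem is Step 3, which is the quantitative heart of the argument, and as written it fails. Your event $\mathcal{E}_3$ asks for a margin at a single witness sketch $j^*$, and conditionally on $\{W>V\}$ its failure probability is $\P\{0<W-V\le t\}/\P\{W>V\}$. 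With $t=\beta(f(i_0)+\|f\|_\infty)\le {\cal O}(\epsilon^\alpha)\,c_f(i_0)$ and the density bound, this conditional probability is of order $\epsilon^\alpha\, c_f(i_0)\,\|f\|_{\ell_\alpha}^\alpha/f(i_0)^{1+\alpha}$, which under $f(i_0)\ge 2(1+C_\alpha)\epsilon\|f\|_{\ell_\alpha}$ is of order $1/\epsilon$ times a constant --- it can exceed $1$, and, crucially, it involves neither $K$ nor $\delta$, so no choice of sketch size can make $\P(\mathcal{E}_3)\ge 1-\delta$ for arbitrary $\delta$. Your claim that the strengthened lower bound on $f(i_0)$ ``forces it to be less than $\delta$'' is therefore unsupported (your displayed bound also silently drops a factor $c_f(i_0)/f(i_0)$ when dividing by $\P\{W>V\}$, and the inequality $\|f\|_\infty\le (D+1)S/\max(D,1)$ should read $\|f\|_\infty\le \max(D,1)\,S$).

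The repair is what the paper does: do not first select a witness sketch and then demand a margin there; instead build the $\beta$--slack into the per--sketch success event. That is, bound $p:=\P\{(Z_1(1)-\beta)f(i_0)< c_f(i_0)Z_1(2)+\beta s_f(i_0)\}$ directly, by splitting on whether $Z_1(2)\le \beta(1+\wtilde s_f(i_0))/\wtilde c_f(i_0)$ --- an event of probability $p(\beta)\le \Phi_\alpha(\beta(D+1))\le \Phi_\alpha(\epsilon^\alpha)\le C_\alpha\epsilon^\alpha$ by the convexity bound $\Phi_\alpha(x)\le C_\alpha x$ --- and otherwise absorbing the additive $\beta$--term by doubling the scale ($\wtilde c_f z+\beta(1+\wtilde s_f)\le 2\wtilde c_f z$), which is exactly where the factor $2$ in \refeq{t:pointwise-algorithm} comes from. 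The condition $f(i_0)\ge 2(1+C_\alpha)\epsilon\|f\|_{\ell_\alpha}$ then guarantees a robust per--sketch success probability of at least $\epsilon^\alpha$, and independence over the $K$ sketches gives failure probability $(1-\epsilon^\alpha)^K\le\delta$ once $K={\cal O}(\ln(1/\delta)/\epsilon^\alpha)$; this is how the failure probability acquires its $\delta$-- and $K$--dependence, which your margin event never does. A secondary accounting point: the paper spends its three $\delta$'s on the event \refeq{Uj-gamma}, on this robust--success event, and on the probability that Nisan's pseudorandom bits fail to fool the algorithm; in your scheme $\mathcal{E}_1,\mathcal{E}_2,\mathcal{E}_3$ already exhaust the budget before the derandomization step, so even if Step 3 were repaired you would end up at $1-4\delta$ rather than the stated $1-3\delta$ unless you merge the witness and margin events as above.
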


 \noi The proof is given in the Appendix.

 \medskip
 The infinite precision was essential in proving that $\{{\rm criterion}(i_0) =1\}$ 
 implies $\{\what f(i_0) = f(i_0)\}$.  We cannot expect this when using real algorithms where the
 $Z_j(i)$'s have finite precision.  The following result shows, however, that there
 is nevertheless an algorithm such that $\{{\rm criterion}(i_0) =1\}$
 implies that $\{\what f(i_0) = f(i_0)\}$ holds with high 
 probability, independently of the nature of the signal $f$.

\smallskip
 \begin{theorem}\label{t:criterion-algorithm}  Let the point estimator $\what f(i_0)$ and its criterion be based on 
  a max--stable sketch in terms of the finite precision variables $\wtilde Z_j(i)$ as in 
  \refeq{wtilde-Z-Z}.  If $\beta \le C(\delta/(K^2 [\ln(NK^2/\delta)]^{1/\alpha}))$, then
  \beq\label{e:criterion-algorithm}
    \P (\{ \what f(i_0) \not = f(i_0)\}\cap \{{\rm criterion}(i_0)=1\}) \le \delta, 
  \eeq
  where the constant $C$ does not depend on the signal $f$.  The last probability bound is also valid for an
  algorithm requiring storage ${\cal O}((\log_2(N) \ln(1/\delta)/\epsilon^{\alpha})^{{\cal O}(1)})$ and
  the same order of operations per stream item.
 \end{theorem}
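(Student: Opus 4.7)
The plan is to reduce the bad event to a single algebraic coincidence among the quantized $\wtilde Z$--variables, and then control it by a conditional anti--concentration estimate combined with a Fr\'echet tail truncation.

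\emph{Reduction.} Since $g_j(i_0)=\wtilde E_j(f)/\wtilde Z_j(i_0)\ge f(i_0)$, with equality iff the argmax of $i\mapsto f(i)\wtilde Z_j(i)$ is attained at $i_0$, the event $\{\what f(i_0)\ne f(i_0)\}$ is exactly the event that for \emph{every} $j$ this argmax lies at some $i\ne i_0$. Writing $E_j^-(f):=\max_{i\ne i_0}f(i)\wtilde Z_j(i)$, we then have $g_j(i_0)=E_j^-(f)/\wtilde Z_j(i_0)$ for every $j$, and $\{{\rm criterion}(i_0)=1\}$ additionally requires some pair $j\ne j'$ with $g_j(i_0)=g_{j'}(i_0)$, that is,
$$
\wtilde Z_{j'}(i_0)\,E_j^-(f)\;=\;\wtilde Z_j(i_0)\,E_{j'}^-(f).
$$

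\emph{Anti--concentration and union bound.} Fix such a pair and condition on every $\wtilde Z$--variable other than $\wtilde Z_{j'}(i_0)$. The identity above pins $\wtilde Z_{j'}(i_0)$ to the unique target $v^\star:=E_{j'}^-(f)\,\wtilde Z_j(i_0)/E_j^-(f)$. Because $\wtilde Z_{j'}(i_0)$ is the $\beta$--quantization of $\Phi_\alpha^{-1}(U_{j'}(i_0))$ with $U_{j'}(i_0)$ uniform, and because the density of the standard $\alpha$--Fr\'echet law is uniformly bounded on the truncation range of the $\wtilde Z$'s, one obtains
$$
\P\bigl(\wtilde Z_{j'}(i_0)=v^\star\bigm|\text{conditioning}\bigr)\;\le\;C_\alpha\,\beta\,[\ln(NK^2/\delta)]^{1/\alpha}.
$$
The truncation invoked here --- that every $\wtilde Z_j(i)$ lies above $[\ln(NK^2/\delta)]^{-1/\alpha}$ --- holds with probability at least $1-\delta/2$ by the Fr\'echet lower--tail identity $\P(Z_j(i)<t)=\exp(-t^{-\alpha})$ and a union bound over the $NK$ variables. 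Summing the above estimate over the $\binom{K}{2}\le K^2/2$ unordered pairs yields
$$
\P\bigl(\{\what f(i_0)\ne f(i_0)\}\cap\{{\rm criterion}(i_0)=1\}\bigr)\;\le\;C'_\alpha K^2\beta\,[\ln(NK^2/\delta)]^{1/\alpha}+\delta/2,
$$
which the stated choice of $\beta$ makes at most $\delta$, giving \refeq{criterion-algorithm}.

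\emph{Algorithmic realization.} As in the proof of Theorem \ref{t:pointwise-algorithm}, the $\wtilde Z_j(i)$'s are generated on demand from Nisan's pseudorandom generator at precision $\beta$, and the $K$ running minima for the $g_j$'s are maintained as the stream is read. The seed length and per--call arithmetic at this precision then deliver the claimed space and per--item processing bounds.

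The main obstacle is the anti--concentration estimate. The naive bound using the uniform bound on $\phi_\alpha$ gives only $O(\beta)$, but this does not account for the possibility that $v^\star$ approaches the boundary of the support of the quantized $\wtilde Z_{j'}(i_0)$, nor for the distortion introduced by composing the two rounding operations with the inverse CDF $\Phi_\alpha^{-1}$. The Fr\'echet lower--tail truncation both confines $v^\star$ to a controllable region and produces the $[\ln(NK^2/\delta)]^{1/\alpha}$ factor appearing in the hypothesis on $\beta$; tightening and justifying this factor is the technical heart of the argument.
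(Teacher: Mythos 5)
Your reduction of the bad event to a pairwise coincidence $\wtilde\xi_j=\wtilde\xi_{j'}$, where $\wtilde\xi_j=\vee_{i\ne i_0}f(i)\wtilde Z_j(i)/\wtilde Z_j(i_0)$, followed by a union bound over the ${K\choose 2}$ pairs, is exactly the paper's first step. Where you genuinely diverge is in bounding the collision probability for one pair: the paper keeps the i.i.d.\ discrete variables and uses $\P\{\wtilde\xi_1=\wtilde\xi_2\}=\sum_x\P\{\wtilde\xi_1=x\}^2\le\sup_x\P\{\wtilde\xi_1=x\}$, then controls the largest atom by sandwiching $\wtilde\xi_1$ between the continuous ratios $\xi_*=c_f(i_0)\frac{1-\eta}{1+\eta}Z_1(2)/Z_1(1)$ and $\xi^*=c_f(i_0)\frac{1+\eta}{1-\eta}Z_1(2)/Z_1(1)$ on the event that all $Z_1(i)\ge\beta/\eta$; Lemma \ref{l:crit-alg-2} gives the bound $\alpha 2^{4\alpha+3}\eta$ uniformly in $c_f(i_0)$, and Lemma \ref{l:crit-alg-1} with $\beta/\eta\le(1/\ln(NK^2/\delta))^{1/\alpha}$ and $\eta={\cal O}(\delta/K^2)$ accounts for the truncation failure --- that is where the logarithmic factor in the hypothesis on $\beta$ actually comes from. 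Your alternative --- condition on all variables except $\wtilde Z_{j'}(i_0)$ and apply anti--concentration to that single quantized Fr\'echet variable --- is a legitimate and arguably more direct route, and it yields a constant independent of $f$ for free, since the density bound does not involve the signal.

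However, the one estimate you defer (``the technical heart'') is precisely what needs proof, and your sketch of it is off in two respects. First, the Fr\'echet density $\alpha z^{-\alpha-1}e^{-z^{-\alpha}}$ is bounded on all of $(0,\infty)$ by a constant $C_\alpha$ depending only on $\alpha$ (it vanishes at both ends), so no truncation is needed to bound it and no factor $[\ln(NK^2/\delta)]^{1/\alpha}$ can be attributed to the density; the correct statement is that whenever \refeq{wtilde-Z-Z} is in force, $\{\wtilde Z_{j'}(i_0)=v^\star\}\subset\{|Z_{j'}(i_0)-v^\star|\le\beta\}$, and since $v^\star$ is determined by variables independent of $Z_{j'}(i_0)$, this has conditional probability at most $2C_\alpha\beta$. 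Second --- the genuine gap --- \refeq{wtilde-Z-Z} holds only on the event \refeq{Uj-gamma}: when $U_{j'}(i_0)\notin[\gamma,1-\gamma]$ the value is clamped, so $\wtilde Z_{j'}(i_0)$ carries boundary atoms of mass of order $\gamma$, which need not be ${\cal O}(\beta)$ (take $L$ very large), and your one--sided lower--tail truncation of the $Z$'s at level $[\ln(NK^2/\delta)]^{-1/\alpha}$ neither implies $U_{j'}(i_0)\ge\gamma$ in general (it does only when $\gamma\le\delta/(NK^2)$) nor controls the upper clamp $U_{j'}(i_0)>1-\gamma$ at all. The repair is in the spirit of what you already do: intersect the bad event with \refeq{Uj-gamma}, whose complement has probability at most $2\gamma NK={\cal O}(\delta)$ once $\gamma={\cal O}(\delta/(NK))$, and apply the $2C_\alpha\beta$ bound per pair on that event; this gives ${\cal O}(K^2\beta)+{\cal O}(\delta)$, which is even slightly stronger than what the stated hypothesis on $\beta$ requires, and then \refeq{criterion-algorithm} follows. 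With that correction, and the Nisan--based implementation cited as in Theorem \ref{t:pointwise-algorithm}, your argument goes through as a valid alternative to the paper's sandwich argument.
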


\noi The proof is given in the Appendix.  

 \medskip
\noi{\bf Remarks:}
\begin{enumerate}
 \item Relation \refeq{criterion-algorithm} shows that our criterion may falsely indicate that $\what f(i_0)= f(i_0)$
 only with small probability.

 \item Our point query and its criterion algorithms have features of both Las Vegas and Monte Carlo randomized algorithms.
 Namely, they give {\it exact} results, as Las Vegas algorithms do, however their computational time is fixed and sometimes (with
 low probability) they fail to give correct results.  As in Monte Carlo algorithms, the probability of getting exact results
 grows with the size of the max--stable sketch.

\end{enumerate}

\section*{Acknowledgments}  We would like to thank Anna Gilbert, Martin Strauss and Joel Tropp
for their remarks and encouragement.  Martin Strauss helped us understand better the results of Nisan 
\cite{nisan:1990}.

\section*{Appendix}

\subsection*{Background on max--stable distributions}

\begin{definition} A random variable $Z$ is said to be max--stable if, for any $a,\ b > 0$, there exist
$c>0$ and $d\in\bbR$, such that 
\beq\label{e:max-stable-def}
 \max\{aZ', bZ''\} \stackrel{d}{=} c Z + d, 
\eeq
where $Z'$ and $Z''$ are independent copies of $Z$.
\end{definition}

This definition resembles the definition of sum--stability where the operation
``max'' is the summation.  Recall that $X$ is sum--stable if for any $a,\ b >0$,
there exist $c>0$ and $d\in\bbR$, such that
$$
 aX' + b X'' \stackrel{d}{=} c X + d.
$$
Both sum--stable and max--stable distributions arise as the limit distributions when taking
sums and maxima, respectively, of independent and identically distributed (iid) random variables.
For more details on sum--stable and max--stable variables, see e.g.\ Samorodnitsky and Taqqu 
\cite{samorodnitsky:taqqu:1994book} and Resnick \cite{resnick:1987}.  We will only review in more
detail the class of $\alpha-$Fr\'echet max--stable variables.

\begin{definition} A random variable $Z$ is said to be $\alpha-$Fr\'echet, for some $\alpha>0$, if
$$
\P\{Z \le x\} = \exp\{ - \sigma^\alpha x^{-\alpha}\},\ \ \mbox{ for } x>0,
$$
and zero otherwise (for $x\le 0$), where $\sigma>0$.  If $\sigma=1$, then $Z$ is said to
be {\it standard } $\alpha-$Fr\'echet.
\end{definition}

\noi We now list some key features of the $\alpha-$Fr\'echet variables

\medskip

\noi$\bullet$ The parameter $\sigma$ plays the role of a scale coefficient.  Indeed, for all $a>0$,
$$
 \P \{ aZ \le x\} = \P\{Z\le x/a\} = \exp\{ -(a \sigma)^{\alpha} x^{-\alpha}\},\ \ x>0,
$$
and therefore $aZ$ is $\alpha-$Fr\'echet with scale coefficient $a\sigma$.

\medskip
\noi$\bullet$ One can check by using independence that \refeq{max-stable-def} holds for any $\alpha-$Fr\'echet $Z$.
More generally, let $Z, Z(1), \ldots, Z(n)$ be iid $\alpha-$Fr\'echet with scale coefficients $\sigma>0$,
and let $f(i) \ge 0$.  Then, by independence, for any $x>0$,
$$
 \P \{ \vee_{1\le i \le n} f(i) Z(i) \le x\} = \prod_{1\le i \le n} \P \{ Z_i \le x/f(i)\} = \exp\{ - \Sum_{i=1}^n 
 f(i)^\alpha \sigma^\alpha x^{-\alpha} \},
$$
and thus 
$$ 
 \xi := \bigvee_{1\le i \le n} f(i) Z(i) \stackrel{d}{=} \sigma_\xi Z,\ \ \mbox{ where } 
 \sigma_\xi = (\sum_i f(i)^\alpha)^{1/\alpha} = \|f\|_{\ell_\alpha},$$
and where $Z$ is a standard $\alpha-$Fr\'echet variable.  That is, the weighted maxima $\xi$ is an
$\alpha-$Fr\'echet variable with scale coefficient $\sigma_\xi$ equal to $\|f\|_{\ell_\alpha}$.

\medskip
{\it This last property is one motivation to consider max--stable sketches.}  The max--stable sketch defined in 
\refeq{Ej} can be viewed as a collection of independent realizations of an $\alpha-$Fr\'echet variable
with scale coefficient equal to $\|f\|_{\ell_\alpha}$. 

\medskip
\noi$\bullet$ The max--stable sketches are {\it max--linear}.  That is, if $f,\ g \in\bbR_+^N$ are two signals, then
for any $a,\ b \ge 0$, we have:
\beq\label{e:max-linear}
E_j(af \vee bg) = a E_j(f) \vee b E_j(g),\ \ \mbox{ for all } j=1,\ldots,K.
\eeq
Indeed,
$$
E_j(af \vee bg) = \bigvee_{1\le i\le N} (af(i)\vee b g(i) )Z_j(i) = a\bigvee_{1\le i\le N} f(i)Z_j(i) \vee b \bigvee_{1\le i\le N} g(i)Z_j(i)
=aE_j(f)\vee b E_j(g).
$$

\medskip
\noi$\bullet$ The $\alpha-$Fr\'echet variables are heavy--tailed.  Namely, by using the Taylor series
expansion of $1-e^{-z}$, one can show that
$$
 \P \{ Z > x \} \sim \sigma^\alpha x^{-\alpha},\ \ \mbox{ as }x\to\infty,
$$
where $a_n \sim b_n$ means $a_n/b_n \to 1,\ n\to\infty$.  Thus, the moments $\E Z^p,\ p>0$ of $Z$ are finite only if
$0<p<\alpha$.  However, when $0<p<\alpha$, these moments can be easily evaluated.  We have
\beq\label{e:Zp}
 \E Z^p = \int_0^\infty z^p d \exp\{-\sigma^\alpha z^{-\alpha}\} = \sigma^p \int_0^\infty u^{-p/\alpha} e^{-u} du = 
 \sigma^p \Gamma(1-p/\alpha),
\eeq
where in the last integral we used the change of variables $u = \sigma^\alpha x^{-\alpha}$ and where
$\Gamma(a) := \int_0^\infty u^{a-1}e^{-u} du,\ a>0$ denotes the Gamma function.

\medskip
\noi$\bullet$ One can also easily express the median ${\rm med}(Z)$ of an $\alpha-$Fr\'echet variable $Z$.  Indeed, 
$\P\{Z \le {\rm med}(Z)\} =1/2$ and by solving $\exp\{-\sigma^\alpha {\rm med}(Z)^{-\alpha}\} = 1/2$, one obtains:
\beq\label{e:medZ}
  {\rm med}(Z) = \frac{\sigma}{(\ln 2)^{1/\alpha}}.
\eeq

\medskip
In Section \ref{s:dist}, we used Relations \refeq{Zp} and \refeq{medZ} to estimate norms and distances
of signals based on their max--stable sketches.

\medskip
\noi$\bullet$ The $\alpha-$Fr\'echet variables can be easily simulated.  If $U_j,\ j\in\bbN$ are independent
uniformly distributed variables in $(0,1)$, then $Z_j := \Phi_\alpha^{-1} (U_j) = (\ln(1/U_j))^{-1/\alpha},\ j\in\bbN$ are
independent standard $\alpha-$Fr\'echet.  Indeed, for all $x>0$,
$$
 \P \{ (\ln(1/U))^{-1/\alpha} \le x \} = \P \{ \ln (1/U) \ge x^{-\alpha} \}  =
 \P \{ U \le e^{-x^{-\alpha}}\} = e^{-x^{-\alpha}}.
$$

\subsection*{Proofs for Sections \ref{s:dist} and \ref{s:dom}}

\medskip
\noi{\sc Proof of Theorem \ref{t:norm}:}  Observe that
$$
\frac{ \ell_{\alpha,r}(f)^r}{\|f\|_{\ell_\alpha}^r} = \frac{1}{\Gamma(1-r/\alpha)K} \Sum_{j=1}^K \xi_j^r, 
$$
and
$$
 \frac{ \ell_{\alpha,{\rm med}}(f)}{\|f\|_{\ell_\alpha}} = (\ln(2))^{1/\alpha} {\rm median}\{ \xi_j,\ 1\le j \le K\},
$$
where $\xi_j,\ j=1,\ldots,K$ are independent standard $\alpha-$Fr\'echet variables.

Therefore, the result in the case of the sample--median based estimator follows from Lemma 2 in Indyk \cite{indyk:2000}, for example,
since the derivative of $\Phi_\alpha^{-1}(y) = (\ln(1/y))^{-1/\alpha}$ at $y = 1/2$ is bounded.  The result in the case of the moment
estimator follows from the Central Limit Theorem, since the variables $\xi_j^r - 1/\Gamma(1-r/\alpha)$ have
zero expectations and finite variances.  $\Box$

\medskip
\noi We will provide more detailed bounds in the above proof with absolute constants in an
extended version of this paper.

\medskip
\noi{\sc Proof of Theorem \ref{t:dist}:}  We consider only $\what\rho(f,g) = \what\rho_{\alpha,r}(f,g)$.  The argument for
the estimator $\what\rho_{\alpha,{\rm med}}(f,g)$ is similar.  Suppose that $K$ is as in Theorem \ref{t:norm}, so that
with probabilities at least $(1-\delta)$, we have
$
\ell_{\alpha,r}(f)^\alpha = \|f\|_{\ell_\alpha}^\alpha(1+{\cal O}(\epsilon)),$
$\ell_{\alpha,r}(g)^\alpha = \|g\|_{\ell_\alpha}^\alpha(1+{\cal O}(\epsilon))$, and $ \ell_{\alpha,r}(f\vee g)^\alpha 
= \|f\vee g\|_{\ell_\alpha}^\alpha(1+{\cal O}(\epsilon))$. 

Thus, by the union bound of probabilities, with probability at least $(1-3\delta)$ we have
\beq\label{e:rho-1+epsilon}
 \what\rho_{\alpha,r}(f,g) = 2 \|f\vee g\|_{\ell_\alpha}^\alpha(1+{\cal O}(\epsilon)) - \|f\|_{\ell_\alpha}^\alpha(1+{\cal O}(\epsilon)) 
- \|g\|_{\ell_\alpha}^\alpha(1+{\cal O}(\epsilon)).
\eeq
Now, since 
$$\rho_\alpha(f,g) =  2 \|f\vee g\|_{\ell_\alpha}^\alpha -  \|f\|_{\ell_\alpha}^\alpha -  \|g\|_{\ell_\alpha}^\alpha \ge 
\eta \|f\vee g\|_{\ell_\alpha}^\alpha \ge \eta \max\{  \|f\|_{\ell_\alpha}^\alpha, \|g\|_{\ell_\alpha}^\alpha\},$$
we get, from \refeq{rho-1+epsilon}, that
$$
 \what\rho_{\alpha,r}(f,g) = (2 \|f\vee g\|_{\ell_\alpha}^\alpha - \|f\|_{\ell_\alpha}^\alpha - \|g\|_{\ell_\alpha}^\alpha) (1 +{\cal O}(\epsilon/\eta)).
$$
The last relation is valid with probability at least $(1-3\delta)$, which implies \refeq{dist}. $\Box$

\subsection*{Proofs for Section \ref{s:point}}

 \medskip
\noi{\sc Proof of Theorem \ref{t:pointwise}:}
  Observe that by \refeq{Ej}, \refeq{gk} and \refeq{fhat-def}, 
 $$
  \what f(i_0) = \min_{1\le j\le K} { E_j(f)/Z_j(i_0) } = \min_{1\le j\le K} 
  f(i_0)\vee Z_{j}(i_0)^{-1} \bigvee_{i\not=i_0} f(i) Z_j(i),
 $$
 where $\vee$ denotes ``max''. Now
  $\vee_{i\not = i_0} f(i) Z_j(i)$
 is independent of  $Z_j(i_0) \stackrel{d}{=} Z_j(1) $ and,
by max--stability (see Appendix), it 
equals in distribution $(\Sum_{i\not = i_0} f(i)^\alpha)^{1/\alpha}
 Z_j(2)$.  Hence
 \beq
\label{e:fZ}
  \what f(i_0) \stackrel{d}{=} f(i_0) \vee \min_{1\le j \le K} {\Big\{} Z_j(1)^{-1} 
  (\sum_{i\not = i_0} f(i)^\alpha)^{1/\alpha} Z_j(2) {\Big\}} =: f(i_0) 
  \vee \min_{1\le j\le K} c_f(i_0) Z_j(2)/Z_j(1),
 \eeq
 where $c_f(i_0) = (\sum_{i\not = i_0} f(i)^\alpha)^{1/\alpha}$.
 By using again the independence in $j$, we get
 \begin{eqnarray}\label{e:fhat=f}
  \P \{\what f(i_0) = f(i_0) \} &=& \P\{ f(i_0) \ge \min_{1\le j \le K} c_f(i_0) Z_j(2)/Z_j(1) \}
                                = 1 - \P\{ f(i_0)  < c_f(i_0) Z_1(2)/Z_1(1) \}^K  \nonumber\\
  &=& 1- \P \{ Z_1(1)/Z_1(2) <  c_f(i_0)/f(i_0)\}^K.
 \end{eqnarray}
 By Lemma \ref{l:xi/eta}, the probability in \refeq{fhat=f} equals $1/(1+ f(i_0)^\alpha/c_f(i_0)^\alpha)$, and hence
 \beq
\label{e:ff}
  \P\{ \what f(i_0) = f(i_0) \} = 1 - {\Big(}\frac{f(i_0)^\alpha}{f(i_0)^\alpha + c_f(i_0)^\alpha}{\Big)}^K 
= 1 - {\Big(} 
  {\Sum_{i\not=i_0} f(i)^\alpha \over \Sum_{i} f(i)^\alpha  } {\Big)}^K
 = 1 - {\Big(} 1 - 
  {f(i_0)^\alpha \over \|f\|_{\ell_\alpha}^\alpha } {\Big)}^K.
 \eeq
 Now  $f(i_0) > \epsilon \|f\|_{\ell_\alpha}$ implies $\P\{\what f(i_0) = f(i_0)\} \ge
 1 - (1-\epsilon^\alpha)^K$.  By choosing $\delta \ge (1-\epsilon^\alpha)^K$, we get
 $K \ge \ln(\delta)/\ln(1-\epsilon^\alpha)$. 
 Since $|\ln(1-x)| \ge x,$ for all $x\in(0,1)$, we get
 that $K \ge \ln(1/\delta)/\epsilon^\alpha \ge \ln(\delta)/\ln(1-\epsilon^\alpha)$, for all $\epsilon \in (0,1)$, which completes
 the proof of part {\bf (i)}.

 \medskip
 We now prove part {\bf (ii)} {\bf (a)}.  Observe that 
 \beq\label{e:g=xi}
  (g_{(1)}(i_0),g_{(2)}(i_0)) \stackrel{d}{=}
 (f(i_0)\vee \xi_{(1)}(i_0)), f(i_0)\vee \xi_{(2)}(i_0)), 
 \eeq
 where $\xi_{(j)}(i_0)\le \xi_{(j+1)}(i_0),\ j\le K-1$
 is the sorted sample of independent random variables 
 $\xi_j(i_0):=c_f(i_0) Z_j(2)/Z_j(1),\ j=1,\ldots,K$. Since the joint distribution of
 $\xi_{(1)}(i_0)$ and $\xi_{(2)}(i_0)$ has a density, it follows that 
 $\P\{ \xi_{(1)}(i_0) = \xi_{(2)}(i_0)\} = 0$.
  Hence, in view of \refeq{g=xi}, with probability 1, we have
\beq
\label{e:rhs}
 \{{\rm criterion} (i_0) = 1\}
\equiv \{g_{(1)}(i_0) = g_{(2)}(i_0)\} = \{f(i_0) \ge \xi_{(2)}(i_0)\}.
\eeq
The right--hand side of \refeq{rhs} occurs only if $\{\what f(i_0) = f(i_0)\}$, which completes
the proof of {\bf (ii) (a)}.
  
\medskip
We now turn to part {\bf (ii)} {\bf (b)} and 
estimate $\P\{{\rm criterion}(i_0)=1\} = \P\{ f(i_0) \ge \xi_{(2)}(i_0)\}$.  We have,
 \begin{eqnarray*}
 \P\{ f(i_0) \ge \xi_{(2)}(i_0)\} &=&
    \P\{ f(i_0) \ge c_f(i_0) Z_{j}(2)/Z_j(1),\ \mbox{ for at least two $j$'s}\}\\
                                  &=& 1 - { K \choose 0} p^K -{ K \choose 1} p^{K-1} (1-p)
                                  \ge 1 - (K+1)p^{K-1},
 \end{eqnarray*}
where $p = \P \{f(i_0) Z_1(1) < c_f(i_0) Z_1(2)\}$.  Reasoning as in
part {\bf (i)}, we get by Lemma \ref{l:xi/eta},
$p = (1- f(i_0)^\alpha/\|f\|_{\ell_\alpha}^\alpha) < 1-\epsilon^\alpha$, since $f(i_0) >
 \epsilon\|f\|_{\ell_\alpha}$.  We thus need to choose $K$'s which
 satisfy the inequality $\delta \ge (K+1)(1-\epsilon^\alpha)^{K-1} \ge (K+1) p^{K-1}$.
 For $K\ge 3$, we have $\wtilde K := K-1 \ge (K+1)/2$, and hence 
 it suffices to have $\delta \ge 2\wtilde K (1-\epsilon^\alpha)^{\wtilde K}$, or simply,
 \beq\label{e:wtilde-K}
  \wtilde K \ge \frac{\ln(\wtilde K)}{\epsilon^\alpha} + \frac{\ln(2/\delta)}{\epsilon^\alpha},
 \eeq
 where we used that $|\ln(1-\epsilon^\alpha)|\ge \epsilon^\alpha,\ \epsilon \in (0,1)$.
 Let $\theta>0$ and $\wtilde K \ge 2\ln(2/\delta)C_\theta/\epsilon^{\alpha+\theta}$, for some
 $C_\theta \ge 1$ (to be specified).  Then, since $\wtilde K \ge 2 \ln(2/\delta) C_\theta/\epsilon^{\alpha+\theta} 
 \ge 2 \ln(2/\delta)/\epsilon^{\alpha}$, it follows that \refeq{wtilde-K} holds if $\wtilde K \ge 
 2 \ln(\wtilde K)/\epsilon^\alpha$.  Since $\wtilde K^{\alpha/(\alpha+\theta)} \ge
 (2C_\theta)^{\alpha/(\alpha+\theta)}/\epsilon^\alpha$, we get
 that \refeq{wtilde-K} holds if
 $$
   \wtilde K \ge \frac{\wtilde K^{\alpha/(\alpha+\theta)}}{(2C_\theta)^{\alpha/(\alpha+\theta)}} 2\ln(\wtilde K) \ge 
 \frac{2 \ln(\wtilde K)}{\epsilon^\alpha},\ \mbox{ or if, }\ 
 (2C_\theta)^{\alpha/(\alpha+\theta)} \wtilde K^{\theta/(\alpha+\theta)} \ge 2 \ln(\wtilde K).
 $$
 The last is equivalent to
 $$
  u^{\gamma} \ge \ln(u),\ \ \mbox{ where } u
 = \wtilde K^{2^{\theta/(\alpha+\theta)}C_\theta^{-\alpha/(\alpha+\theta)}},
 $$
 and $\gamma = 
(\theta/(\alpha+\theta))2^{-\theta/(\alpha+\theta)}
 C_\theta^{\alpha/(\alpha+\theta)}$.
  We have that
 $u^{\gamma} \ge \ln (u),\ u>1,$ for all $\gamma \ge 1/e$, and thus for $C_\theta$ we obtain:
 \beq\label{e:Ctheta}
    \gamma \equiv \frac{\theta}{\alpha+\theta}2^{-\theta/(\alpha+\theta)}
C_\theta^{\alpha/(\alpha+\theta)} \ge 1/e\ \ \mbox{ or }\ \
  C_\theta \ge \frac{2^{\theta/\alpha}}{e^{1+\theta/\alpha}} 
{\Big(}1+\frac{\alpha}{\theta}{\Big)}^{1+\theta/\alpha} + 1,
 \eeq
 where we add $1$ in the last relation to ensure that $C_\theta\ge 1$.
 $\Box$

\medskip
 \noi{\sc Proof of Theorem \ref{t:pointwise-algorithm}:} We will first show that
 \beq\label{e:2.16}
  \P \{ \wtilde f(i_0) = f(i_0)\} \ge 1- 2\delta,
 \eeq
 where $\wtilde f(i_0)$  is defined as $\what f(i_0)$ but based on {\it truly independent} variables $\wtilde Z_j(i)$
 which satisfy \refeq{wtilde-Z-Z}.  Recall that Relation \refeq{wtilde-Z-Z} holds if \refeq{Uj-gamma} holds.
 Choose $\gamma>0$ and $L$ so that $\P(B)\ge 1-\delta$, where $B$ denotes the event $\{\mbox{Condition \refeq{Uj-gamma} holds}\}$. 
 As in the proof of Theorem \ref{t:pointwise}, 
 $$
  \{\wtilde f(i_0) = f(i_0)\} = \{ f(i_0) \ge \min_{1\le j\le K} \wtilde Z_j(i_0)^{-1} \vee_{i\not = i_0} f(i) \wtilde Z_j(i)\}.
 $$
 Observe that, since $B$ holds, by \refeq{wtilde-Z-Z}, 
 $\vee_{i\not=i_0} f(i) \wtilde Z_j(i) \ge \vee_{i\not=i_0} f(i) Z_j(i) + \beta s_f(i_0)$,
 where $s_f(i_0) := \max_{i\not=i_0} f(i)$.  Thus, by using also that $\wtilde Z_j(i_0) \ge Z_j(i_0) - \beta$, we get
 \beq\label{e:t:pointwise-algorithm-1.1}
 \{\wtilde f(i_0) = f(i_0)\} \supset B \cap {\Big\{} f(i_0) \ge \min_{1\le j\le K} (Z_j(i_0)-\beta)^{-1} {\Big(}
 \vee_{i\not=i_0} f(i) Z_j(i) + \beta s_f(i_0){\Big)} {\Big\}} =: B \cap C.
 \eeq
 Since $\P(B)\ge 1-\delta$ by Relation \refeq{t:pointwise-algorithm-1.1} it follows that, to establish \refeq{2.16},
 it is enough to show that $\P(C)\ge 1-\delta$, where $C$ denotes the second event in the right--hand side of 
 \refeq{t:pointwise-algorithm-1.1}.  Indeed, $\P\{\wtilde f(i_0) = f(i_0)\} \ge \P(B\cap C) \ge  1-\P(B') - \P(C') \ge 1- 2\delta$.  
 The event $C$, however, involves ideal precision Fr\'echet random variables where their corresponding $U_j(i)$'s 
 are not bound to satisfy Condition \refeq{Uj-gamma}.  We can therefore manipulate $C$ as in the proof of Theorem \ref{t:pointwise} {\bf (i)}.

 Since $\vee_{i\not=i_0} f(i) Z_j(i) \stackrel{d}{=} c_f(i_0) Z_1(2),$ where $c_f(i_0) = (\sum_{i\not=i_0} f(i)^\alpha)^{1/\alpha}$,
 by independence:
 \begin{eqnarray*}
  \P (C)  &\ge& 1 - \P \{ (Z_1(1) - \beta) f(i_0) < c_f(i_0) Z_1(2) + \beta s_f(i_0)\}^K \\
               &=& 1-{\Big(}\E \exp\{-{\Big(}\wtilde c_f(i_0) Z_1(2) + \beta(1+\wtilde s_f(i_0)){\Big)}^{-\alpha}\}
               {\Big)}^K,
 \end{eqnarray*}
 where $\wtilde c_f(i_0) = c_f(i_0) / f(i_0)$ and $ \wtilde s_f(i_0) = s_f(i_0) / f(i_0)$.
 We now bound above the expectation in the last relation.  Note that
 $\wtilde c_f(i_0)z + \beta(1+\wtilde s_f(i_0)) \le 2\wtilde c_f(i_0) z,$ for all 
 $z > \beta(1+\wtilde s_f(i_0))/ \wtilde c_f(i_0)$.  Thus, by \refeq{Frechet} and as in Relation \refeq{xi/eta} below,
 \begin{eqnarray} \label{e:1-P}
 1-\P (C)  &\le&
  {\Big(} \P\{Z_1(2) \le \beta(1+\wtilde s_f(i_0))/\wtilde c_f(i_0)\} +  
 \E \exp\{ -(2\wtilde c_f(i_0) Z_1(2))^{-\alpha} \} {\Big)}^K \nonumber \\
 &=& {\Big(} p(\beta) + \frac{1}{(2\wtilde c_f)^{-\alpha} +1} {\Big)}^K,
 \end{eqnarray}
 where $p(\beta) = \P\{Z_1(2) \le \beta(1+\wtilde s_f(i_0))/\wtilde c_f(i_0)\} =
 \exp\{- (\beta(1+\wtilde s_f(i_0))/\wtilde c_f(i_0))^{-\alpha}\}$.

 Let now $K$ be such that $(1-\epsilon^\alpha)^K \le \delta$, that is, $K = {\cal O}(\ln(1/\delta)/\epsilon^\alpha)$.
 Then, in view of \refeq{1-P}, $\P(C) \ge 1-\delta$, provided that
 $$
  p(\beta) + \frac{1}{(2\wtilde c_f(i_0))^{-\alpha} +1} \le  1-\epsilon^\alpha,
 $$
 or, equivalently, if $f(i_0)^\alpha/(f(i_0)^\alpha + 2^\alpha \sum_{i\not=i_0} f(i_0)^\alpha)
 \ge (\epsilon^\alpha + p(\beta))$.  The last inequality holds if
 $$
   f(i_0) \ge 2(\epsilon^\alpha + p(\beta))^{1/\alpha} \|f\|_{\ell_\alpha}.
 $$
 Thus, to prove \refeq{2.16}, it remains to show that $(\epsilon^\alpha + p(\beta))^{1/\alpha} \le (1+C_\alpha)\epsilon,$
 if \refeq{t:pointwise-algorithm} holds and $\beta\le \epsilon^\alpha/(D+1)$, where  
 $\beta$ is the ``precision'' parameter in \refeq{beta-precision}.  We have that $(1 + \wtilde s_f(i_0))/\wtilde c_f(i_0) \le
 1+ f(i_0)/c_f(i_0) \le 1+ D$, and thus
 $$
  p(\beta) \le \Phi_\alpha(\beta(D+1)) \le \Phi_\alpha (\epsilon^\alpha),
 $$
 since $\beta(D+1) \le \epsilon^\alpha$.  One can show that $\Phi_\alpha''(x)\ge 0,\ \forall x\in(0,(\alpha/(\alpha+1))^{1/\alpha})$,
 and hence $\Phi_\alpha(x) \le \Phi_\alpha'(1+1/\alpha)x = C_\alpha x,\ \forall x\in(0,(\alpha/(\alpha+1))^{1/\alpha})$.
 Hence, $p(\beta)\le C_\alpha \epsilon^\alpha,$ for all $\epsilon \in (0,(\alpha/(\alpha+1))^{1/\alpha^2})$, which implies
 \refeq{2.16}.

 \medskip
 Now, consider a point estimator $\what f(i_0)$, defined as $\wtilde f(i_0)$, but with $\wtilde Z_j(i)$ replaced by pseudo--random
 variables, with the same precision (i.e.\ taking values in the set $V_L = \{p/q,\ p,q=0,1,\ldots,L\}$).  We will argue that a 
 pseudo--random number generator exists so that
 \beq\label{e:f-tilde=f-hat}
  \P \{ \wtilde f(i_0) = \what f(i_0) \} \ge 1-\delta,
 \eeq 
 for some $\wtilde f(i_0)$ based on independent $\wtilde Z_j(i)$'s.  This, in view of \refeq{2.16}, would imply
 \refeq{t:pointwise-algorithm-c}.

 We first need \refeq{Uj-gamma} to hold with probability at least $(1-\delta)$ with $\gamma>0$
 such that $\beta = {\cal O}(1/L\gamma^{1+1/\alpha}) \le \epsilon^\alpha/D$.  This can be achieved by taking $L = {\cal O}(
 (DNK/\delta\epsilon)^{{\cal O}(1)})$.  Now, to ensure that \refeq{2.16} holds, it suffices to take
 $K = {\cal O}(\ln(1/\delta)/\epsilon^{\alpha})$.  Therefore, one needs $\log_2(L) = {\cal O}(\ln(DN/\delta \epsilon))$ bits to represent each
 $\wtilde Z_j(i)$. 

 Hence our algorithm uses ${\cal O}(\log_2(L)N\ln(1/\delta)/\epsilon^{\alpha})$ random bits.  As in Indyk \cite{indyk:2000}, by using 
 the results of Nisan \cite{nisan:1990}, for each $j,\ 1\le j\le K,$ one can generate pseudo--random $\overline{U_j}(i)$'s, which are
 ``very close'' to some independent $\wtilde U_j(i)$'s by using only
 ${\cal O}(\log_2(L)\log_2(N/\delta)\ln(1/\delta)/\epsilon^{\alpha})$ truly random seeds.  
  These $\overline{U}_j(i)$'s would ``fool'' our algorithm with probability
 at least $(1-\delta)$  since it uses only ${\cal O}( \log_2(L) \ln(1/\delta)/\epsilon^{\alpha})$ space for computations with random bits.
 That is, one has \refeq{f-tilde=f-hat}.  In summary, we need to store ${\cal O}(K \log_2(L)\log_2(N/\delta))
 = {\cal O}(\log_2(DN/\delta \epsilon) \log_2(N/\delta) \log_2(1/\delta)/\epsilon^\alpha)$ bits, needed primarily for the truly random seeds,
 and to perform about ${\cal O}(K\log_2(L)) = {\cal O}( \log_2(DN/\delta \epsilon) \log_2(1/\delta)/\epsilon^\alpha)$ 
 of bit--wise operations per each stream item, in order to maintain the sketch.
$\Box$

\medskip
\noi{\sc Proof of Theorem \ref{t:criterion-algorithm}:}
Let, as in \refeq{gk}, $\wtilde g_j(i_0):= \vee_{i} f(i) \wtilde Z_j(i)/\wtilde Z_j(i_0) =: f(i_0) \vee \wtilde \xi_j(i_0)$, where
$\wtilde \xi_j(i_0) := \vee_{i\not=i_0} f(i) \wtilde Z_j(i)/\wtilde Z_j(i_0)$, $j=1,\ldots,K$.  Since
 $\{{\rm criterion}(i_0)=1\} = \{f(i_0) \vee \wtilde \xi_{(2)}(i_0) = f(i_0) \vee \wtilde \xi_{(1)}(i_0)\}$, it follows that
$$
\{\what f(i_0) \not = f(i_0)\} \cap \{ {\rm criterion}(i_0)=1\}  \subset \{ \wtilde \xi_{(2)}(i_0) =  \wtilde \xi_{(1)}(i_0) \},
$$
where $\wtilde \xi_{(1)}(i_0) \le \wtilde \xi_{(2)}(i_0) \le \cdots $ is the ordered sample of $\wtilde \xi_{j}(i_0),\ j=1,\ldots,K$.

Therefore, the probability in \refeq{criterion-algorithm} is bounded above by:
\begin{eqnarray}\label{e:crit-alg-1}
 \P\{ \wtilde \xi_{(2)}(i_0) =  \wtilde \xi_{(1)}(i_0) \} &\le& \P\{ \wtilde \xi_{j_1} = \wtilde \xi_{j_2},\ \mbox{ for some }j_1\not=j_2,\ 
 j_1,\ j_2=1,\ldots, K\}\nonumber\\
 &\le& {K \choose 2} \P \{\wtilde \xi_{1}(i_0) =  \wtilde \xi_{2}(i_0) \}.
\end{eqnarray}
We now focus on bounding the last probability.  Since the $\wtilde \xi_j(i_0)$'s  are independent and discrete random variables, we have
\beq\label{e:sum-atoms}
 \P \{\wtilde \xi_{1}(i_0) =  \wtilde \xi_{2}(i_0) \} = \Sum_{x \, :\, \mbox{ $x$ is an atom}} \P \{\wtilde \xi_1(i_0) = x \} ^2.
\eeq
Let $\eta>0$ and observe that $(z-\beta) \ge (1-\eta)z$ and $(z+\beta)\le (1+\eta)z,$ for all $z\ge \beta/\eta$.  Thus, in view of
\refeq{wtilde-Z-Z}, 
$$
 \wtilde \xi_1(i_0) \le \vee_{i\not=i_0} f(i) \frac{(Z_1(i) + \beta)}{(Z_1(i_0) - \beta)} \le \vee_{i\not=i_0} f(i) \frac{(1+\eta)}{(1-\eta)} 
 \frac{Z_1(i)}{Z_1(i_0)},\ \ \mbox{ if } Z_1(i) \ge \beta/\eta,\ \forall i.
$$
Since the $Z_1(i)$'s are ideal precision, independent and $\alpha-$Fr\'echet, 
$\vee_{i\not=i_0} f(i)\frac{(1+\eta)}{(1-\eta)} Z_1(i)/Z_1(i_0) \stackrel{d}{=} c_f(i_0)\frac{(1+\eta)}{(1-\eta)} Z_1(2)/Z_1(1),$
where $c_f(i_0) = (\sum_{i\not=i_0} f(i)^\alpha)^{1/\alpha}$.  Therefore,
$$
 \wtilde \xi_1(i_0) \stackrel{d}{\le} c_f(i_0) \frac{(1+\eta)}{(1-\eta)} \frac{Z_1(2)}{Z_1(1)} =: \xi^*,\ \ \mbox{ if }Z_1(i) \ge \beta/\eta,\ 
 \forall i,
$$
where $\stackrel{d}{\le}$ denotes dominance in distribution.  Similarly $\wtilde \xi_1(i_0) \stackrel{d}{\ge} \xi_*$, where
$\xi_*:= c_f(i_0) \frac{(1-\eta)}{(1+\eta)} Z_1(2)/Z_1(1)$.  Thus,
\begin{eqnarray*}
\P\{ \wtilde \xi_1(i_0) = x\} &\le& \P(\{\xi_1(i_0) = x\} \cap \{Z_1(i) \ge \beta/\eta,\ \forall i\}) 
 +   \P\{Z_1(i) \not\ge \beta/\eta,\ \mbox{ for some }i\}\\
& \le & F_{\xi_*}(x) - F_{\xi^*}(x) + (1-\P\{Z_1(1) > \beta/\eta\}^N),
\end{eqnarray*}
where the last inequality follows from the fact that $\xi_*$ and $\xi^*$ have continuous cumulative
distribution functions $ F_{\xi_*}(x) := \P\{\xi_*\le x\}$.

Thus, for the probability in \refeq{sum-atoms}, we get:
\beq\label{e:crit-alg-2}
 \P \{\wtilde \xi_{1}(i_0) =  \wtilde \xi_{2}(i_0) \} \le \sup_{x >0} (F_{\xi_*}(x) - F_{\xi^*}(x)) + 
 (1-(1-\exp\{-(\beta/\eta)^{-\alpha}\})^N).
\eeq
By taking $\beta/\eta \le (1/\ln(NK^2/\delta))^{1/\alpha}$, we can make the second term in the right--hand side of \refeq{crit-alg-2}
smaller than $\delta/K^2$ (Lemma \ref{l:crit-alg-1}).  Indeed, the second term is a monotone increasing function of
$(\beta/\eta)$.  Hence by setting $\epsilon:= (1/\ln(NK^2/\delta))^{1/\alpha}$ the upper bound in \refeq{l:crit-alg-1} becomes
$N \exp\{ - (1/[\ln(NK^2/\delta)]^{1/\alpha})^{-\alpha}\} = N \exp\{ -\ln(NK^2/\delta)\} = \delta/K^2$.
Also, by Lemma \ref{l:crit-alg-2}, the first term in the right--hand side of \refeq{crit-alg-2}, is bounded above by
$\alpha 2^{4\alpha+3}\eta$, for all $\eta \in (0,1/2)$.  Thus, in view of \refeq{crit-alg-1}, the probability in
\refeq{criterion-algorithm} is bounded  above by ${\cal O}(K^2\eta) + \delta/2$, which can be made
 smaller than $\delta$ by taking $\eta = {\cal O}(\delta/K^2)$ and $\beta/\eta = {\cal O}((1/\ln(NK^2/\delta))^{1/\alpha})$.  This 
implies that $\beta = {\cal O}(\delta/(K^2 [\ln(NK^2/\delta)]^{1/\alpha}))$ would ensure that \refeq{criterion-algorithm} holds.  
Observe that the constant in the last ${\cal O}-$bound does not depend on the signal $f$.
$\Box$

\subsection*{ Auxiliary lemmas}

\begin{lemma}\label{l:xi/eta} Let $\xi$ and $\eta$ be independent, standard $\alpha-$Fr\'echet variables.  Then, for all $x>0$,
$$
 \P \{ \xi/\eta \le x \} = \frac{1}{x^{-\alpha}+1}.
$$
\end{lemma}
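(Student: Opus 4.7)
\medskip
\noindent My plan is to transform the ratio $\xi/\eta$ into something involving standard exponentials, where the distribution of ratios is classical and easy to compute.

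First, I would observe that if $\xi$ is standard $\alpha$-Fr\'echet, then $E := \xi^{-\alpha}$ has an exponential distribution with mean one. Indeed, for any $t>0$,
$$
 \P\{\xi^{-\alpha}\le t\} = \P\{\xi \ge t^{-1/\alpha}\} = 1-\exp\{-(t^{-1/\alpha})^{-\alpha}\} = 1-e^{-t}.
$$
Applying this to both $\xi$ and $\eta$, I obtain independent standard exponentials $E_1 := \xi^{-\alpha}$ and $E_2 := \eta^{-\alpha}$.

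Next, I would rewrite the event $\{\xi/\eta \le x\}$, for $x>0$, in terms of $E_1$ and $E_2$. Since $z \mapsto z^{-\alpha}$ is strictly decreasing on $(0,\infty)$,
$$
 \{\xi/\eta \le x\} = \{\xi^{-\alpha} \ge x^{-\alpha}\eta^{-\alpha}\} = \{E_1 \ge x^{-\alpha} E_2\}.
$$
By independence and Fubini,
$$
 \P\{E_1 \ge x^{-\alpha}E_2\} = \int_0^\infty \P\{E_1 \ge x^{-\alpha}y\} e^{-y}\,dy
 = \int_0^\infty e^{-x^{-\alpha}y}e^{-y}\,dy = \frac{1}{x^{-\alpha}+1},
$$
which is exactly the desired identity.

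The argument is essentially routine once one recognizes the $\xi \mapsto \xi^{-\alpha}$ transformation, so I do not anticipate any real obstacle. The only care needed is to keep the direction of the inequality correct when inverting through $z\mapsto z^{-\alpha}$; this is the one step where a sign or monotonicity slip could introduce an error. An alternative (equivalent) route would be to integrate directly against the Fr\'echet density $f_\eta(y) = \alpha y^{-\alpha-1}e^{-y^{-\alpha}}$ and perform the substitution $u = y^{-\alpha}$, which reduces the calculation to the same exponential integral above.
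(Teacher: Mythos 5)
Your proof is correct and is essentially the paper's argument in different notation: the paper also conditions on one variable, writing $\P\{\xi/\eta\le x\}=\E\exp\{-(\eta x)^{-\alpha}\}$ and evaluating the integral via the substitution $u=e^{-y^{-\alpha}}$, which is exactly your reduction to standard exponentials carried out implicitly. Your monotonicity step $\{\xi/\eta\le x\}=\{E_1\ge x^{-\alpha}E_2\}$ and the final integral are both correct, so there is nothing to fix.
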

\begin{proof}  By independence, and in view of \refeq{Frechet}, we have:
\beq\label{e:xi/eta}
 \P\{\xi/\eta\le x\} = \E \exp\{-(\eta x)^{-\alpha}\} = \int_0^\infty e^{-y^{-\alpha} x^{-\alpha}} d e^{-y^{-\alpha}}
= \int_0^1 u^{x^{-\alpha}} du = \frac{1}{x^{-\alpha}+1}.
\eeq
Here, we used the change of variables $u:=e^{-y^{-\alpha}}$.
$\Box$
\end{proof}

\begin{lemma}\label{l:crit-alg-1} For all $\epsilon \in (0,1)$ and $N\ge 1,$ we have
\beq\label{e:l:crit-alg-1}
  1-(1-\exp\{ -(\epsilon)^{-\alpha}\})^N \le  Ne^{-\epsilon^{-\alpha}}.
\eeq
\end{lemma}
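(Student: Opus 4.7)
The plan is to recognize that this inequality is essentially Bernoulli's inequality in disguise. Setting $p := \exp\{-\epsilon^{-\alpha}\}$, the claim becomes $1-(1-p)^N \le Np$, and since $\epsilon \in (0,1)$ and $\alpha>0$ force $\epsilon^{-\alpha} > 1$, we have $p \in (0, 1/e) \subset [0,1]$, so Bernoulli's inequality $(1-p)^N \ge 1 - Np$ applies directly.

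To make this self-contained, I would prove Bernoulli's inequality by induction on $N$. The base case $N=1$ gives equality. For the inductive step, assume $(1-p)^N \ge 1 - Np$ and multiply both sides by $(1-p) \ge 0$ to get
$$
(1-p)^{N+1} \ge (1-Np)(1-p) = 1 - (N+1)p + Np^2 \ge 1 - (N+1)p,
$$
since $Np^2 \ge 0$. Rearranging yields $1 - (1-p)^{N+1} \le (N+1)p$, which closes the induction.

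Substituting $p = e^{-\epsilon^{-\alpha}}$ back in gives exactly \refeq{l:crit-alg-1}. There is no real obstacle here; the only thing to verify is that $p \in [0,1]$, which is immediate from the monotonicity of the exponential and the positivity of $\epsilon^{-\alpha}$. Alternatively, one can give a probabilistic argument: write the left-hand side as $\P(\bigcup_{i=1}^N A_i)$ where the $A_i$ are independent events with $\P(A_i) = p$, and apply the union bound $\P(\bigcup_i A_i) \le \sum_i \P(A_i) = Np$. Either route is a one-line argument once Bernoulli is invoked.
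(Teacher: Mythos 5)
Your proposal is correct and matches the paper's own argument: both reduce the claim to Bernoulli's inequality $(1-x)^N \ge 1-Nx$ applied with $x = e^{-\epsilon^{-\alpha}} \in (0,1)$. The inductive proof of Bernoulli and the union-bound alternative are fine extras but do not change the substance.
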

\begin{proof}
We have that $(1-x)^N \ge 1- N x,$ for all $x \in (0,1)$.  Thus, for all $x\in(0,1)$,
$1 - (1-x)^N \le 1- (1-Nx) = Nx$  and by setting $x:= e^{-\epsilon^{-\alpha}}$, we obtain
\refeq{l:crit-alg-1}. $\Box$
\end{proof}

\begin{lemma}\label{l:crit-alg-2} Let $\xi^* = c\frac{(1+\eta)}{(1-\eta)} Z'/Z''$
 and $\xi_* = c\frac{(1-\eta)}{(1+\eta)} Z'/Z''$, for some $c>0$ and $\eta\in(0,1/2)$, where
 $Z'$ and $Z''$ are independent standard $\alpha-$Fr\'echet variables.  Then, we have
$$
 \sup_{x>0} |F_{\xi_*}(x) - F_{\xi^*}(x)| \le \alpha 2^{4\alpha+3} \eta,\ \ \ \forall \eta\in(0,1/2).
$$
where $F_{\xi_*}(x):=\P\{\xi_*\le x\}$ and  $F_{\xi^*}(x):=\P\{\xi^*\le x\}$.
\end{lemma}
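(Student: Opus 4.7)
The plan is to reduce everything to the Kolmogorov--Smirnov distance between two rescalings of a single fixed distribution, and then control that distance via the mean value theorem after factoring out the scale $c$.

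First I would introduce $W := c Z'/Z''$, so that $\xi^* = a_1 W$ and $\xi_* = a_2 W$ with $a_1 := (1+\eta)/(1-\eta)$ and $a_2 := (1-\eta)/(1+\eta) = 1/a_1$. Lemma \ref{l:xi/eta} (applied after scaling by $1/c$) gives an explicit CDF
$$
F_W(y) \;=\; \P\{Z'/Z''\le y/c\} \;=\; \frac{(y/c)^\alpha}{1+(y/c)^\alpha},\qquad y>0,
$$
so $F_{\xi^*}(x)=F_W(x/a_1)$ and $F_{\xi_*}(x)=F_W(x/a_2)$. The problem is therefore to bound $|F_W(x/a_2)-F_W(x/a_1)|$ uniformly in $x>0$.

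Next I would apply the mean value theorem. A direct computation gives
$$
F_W'(y) \;=\; \frac{\alpha}{c}\,\frac{(y/c)^{\alpha-1}}{(1+(y/c)^\alpha)^2}, \qquad
y F_W'(y) \;=\; \alpha\,\frac{u}{(1+u)^2} \ \text{ with } u=(y/c)^\alpha.
$$
The key observation is that $\sup_{u\ge 0}u/(1+u)^2 = 1/4$, so the scale-free quantity $yF_W'(y)$ is uniformly bounded by $\alpha/4$, independently of $c$ and of $y$. This is what prevents the derivative from blowing up in the middle of the distribution, even though $F_W'$ itself is unbounded for large $\alpha$.

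Finally I would assemble the two estimates. Writing $F_W(x/a_2)-F_W(x/a_1)=F_W'(y^*)\,(x/a_2-x/a_1)$ for some $y^*\in(x/a_1,x/a_2)$, I factor it as
$$
|F_W(x/a_2)-F_W(x/a_1)| \;\le\; \bigl(y^* F_W'(y^*)\bigr)\,\frac{x/a_2 - x/a_1}{y^*} \;\le\; \frac{\alpha}{4}\cdot\frac{x/a_2-x/a_1}{x/a_1},
$$
where in the last step I used $y^*\ge x/a_1$. Since $a_1 a_2 = 1$, the remaining ratio equals $a_1/a_2 - 1 = (1+\eta)^2/(1-\eta)^2 - 1 = 4\eta/(1-\eta)^2$, which for $\eta\in(0,1/2)$ is at most $16\eta$. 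Combining gives a uniform-in-$x$ bound of order $\alpha\eta$, comfortably inside the claimed $\alpha 2^{4\alpha+3}\eta$.

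The main obstacle, as suggested above, is obtaining a bound that is uniform in $x$: $F_W'$ is locally large for intermediate $y$ (and grows badly with $\alpha$), and $|x/a_2-x/a_1|$ grows linearly with $x$. Both effects cancel once the argument is framed in terms of \emph{relative} perturbations $(x/a_2-x/a_1)/y^*$, which is why the product $yF_W'(y)$ is the right quantity to estimate. A secondary concern is that the constants could in principle depend on $c$, but the change of variables $v = y/c$ eliminates $c$ entirely, since $\xi^*/\xi_*=a_1/a_2$ is a pure (dimensionless) multiplicative factor.
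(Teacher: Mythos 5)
Your proof is correct, and while it starts from the same two ingredients as the paper---the explicit law of $Z'/Z''$ from Lemma \ref{l:xi/eta} and the mean value theorem---the way you deploy them is genuinely different and yields a sharper result. The paper writes $F_{\xi_*}(x)-F_{\xi^*}(x)=\psi\big((1-\eta)^\alpha(1+\eta)^{-\alpha}x^{-\alpha}\big)-\psi\big((1+\eta)^\alpha(1-\eta)^{-\alpha}x^{-\alpha}\big)$ with $\psi(y)=1/(c^{-\alpha}y+1)$ and applies the MVT in the transformed variable, bounding $|\psi'|$ times the increment via $ab/(a+b)^2\le 1/2$, and then applies the MVT a second time to control $(1+\eta)^{2\alpha}-(1-\eta)^{2\alpha}$; the factors $(1\pm\eta)^{\pm 2\alpha}$ accumulated along the way are exactly what produce the constant $\alpha 2^{4\alpha+3}$. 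You instead apply the MVT once to the distribution function $F_W$ in the original variable and observe that the scale-free quantity $yF_W'(y)=\alpha u/(1+u)^2$, $u=(y/c)^\alpha$, is bounded by $\alpha/4$ uniformly in $y$ and $c$, so that only the relative perturbation $a_1/a_2-1=4\eta/(1-\eta)^2\le 16\eta$ matters; this gives $\sup_{x>0}|F_{\xi_*}(x)-F_{\xi^*}(x)|\le 4\alpha\eta$, which implies the stated bound since $4\le 2^{4\alpha+3}$, and in fact removes the exponential dependence on $\alpha$ in the constant. All steps check out: the MVT is applied on $[x/a_1,x/a_2]\subset(0,\infty)$ where $F_W$ is smooth, the lower bound $y^*\ge x/a_1$ makes the estimate uniform in $x$, and the scale $c$ cancels exactly as you note. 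The only thing worth adding is the (one-line) remark that $F_{\xi_*}(x)\ge F_{\xi^*}(x)$, so the absolute value is the difference in that order; your improved constant could in principle also be propagated into the constant $C$ of Theorem \ref{t:criterion-algorithm}.
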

\begin{proof} We have that
$$
 \Delta F := F_{\xi_*}(x) - F_{\xi^*}(x) = \P\{ Z'/Z'' \le C_* x\} -  \P\{ Z'/Z'' \le C_* x\},
$$
where $C_* = (1+\eta)/c(1-\eta)$ and $C^* = (1-\eta)/c(1+\eta)$.  By Lemma \ref{l:xi/eta}, we have that
$$
\Delta F = \psi{\Big(}\frac{(1-\eta)^\alpha}{(1+\eta)^\alpha x^\alpha}{\Big)} - 
\psi{\Big(}\frac{(1+\eta)^\alpha}{(1-\eta)^\alpha x^\alpha}{\Big)},
$$ 
where $\psi(y) = 1/(c^{-\alpha}y + 1)$.  By the mean value theorem, since $|\psi'(y)|= c^{-\alpha}/(c^{-\alpha}y+1)^2$ 
is monotone decreasing in $y>0$, the last expression is bounded above by:
\begin{eqnarray}\label{e:l:crit-alg-2-1}
 & & |\psi'(\frac{(1-\eta)^\alpha}{(1+\eta)^\alpha x^\alpha})| {\Big |} \frac{(1+\eta)^\alpha}{(1-\eta)^\alpha x^\alpha}
 - \frac{(1-\eta)^\alpha}{(1+\eta)^\alpha x^\alpha} {\Big|}
\nonumber\\
& & \ \ \ \ \ \ \ \ \ \ \  = \frac{c^{-\alpha}(1+\eta)^{2\alpha}x^\alpha}{(c^{-\alpha}(1-\eta)^\alpha + (1+\eta)^\alpha x^\alpha)^\alpha}
{\Big(} {(1+\eta)^{2\alpha} - (1-\eta)^{2\alpha} \over (1-\eta^2)^{\alpha} } {\Big)}\nonumber\\
& & \ \ \ \ \ \ \ \ \ \ \ \ \le \frac{(1+\eta)^\alpha}{2(1-\eta)^\alpha} {\Big(}{ (1+\eta)^{2\alpha} - (1-\eta)^{2\alpha}
\over (1-\eta^2)^\alpha} {\Big)} = { (1+\eta)^{2\alpha} - (1-\eta)^{2\alpha} \over 2(1-\eta)^{2\alpha}},
\end{eqnarray}
where the last inequality follows from the fact that $ab/(a+b)^2 \le 1/2$, $a,\, b\in \bbR$, with
$a:= c^{-\alpha}(1-\eta)^\alpha$ and $b=(1+\eta)^\alpha x^\alpha$.  By using the mean value theorem again, we obtain that,
for some $\theta\in[-1,1],$ the right--hand side of \refeq{l:crit-alg-2-1} is bounded above by 
$$
{2\alpha (1+\theta \eta)^{2\alpha-1}  \over (1-\eta)^{2\alpha} }2\eta \le \alpha 2^{2\alpha +2 + |2\alpha-1|} \eta \le
 \alpha 2^{4\alpha+3} \eta,
$$
for all $\eta\in(0,1/2)$.  $\Box$
\end{proof}


\end{document}